\newtheorem{proposition}{Proposition}
\newtheorem{corollary}{Corollary}
\theoremstyle{definition}
\newtheorem{example}{Example}
\newtheorem{remark}{Remark}
\newcommand{\bra}[1]{\langle #1|}
\newcommand{\ket}[1]{| #1 \rangle }
\newcommand{\ip}[2]{{\langle #1|}{ #2 \rangle }}
\begin{document}
	
	\title{Positive tensor products of maps and $n$-tensor-stable positive qubit maps}
	
	\author{Sergey N. Filippov}
	
	\affiliation{Moscow Institute of Physics and Technology,
		Institutskii Per. 9, Dolgoprudny, Moscow Region 141700, Russia}
	
	\author{Kamil Yu. Magadov}
	
	\affiliation{Moscow Institute of Physics and Technology,
		Institutskii Per. 9, Dolgoprudny, Moscow Region 141700, Russia}
	
	\begin{abstract}
		We analyze positivity of a tensor product of two linear qubit
		maps, $\Phi_1 \otimes \Phi_2$. Positivity of maps $\Phi_1$ and
		$\Phi_2$ is a necessary but not a sufficient condition for
		positivity of $\Phi_1 \otimes \Phi_2$. We find a non-trivial
		sufficient condition for positivity of the tensor product map
		beyond the cases when both $\Phi_1$ and $\Phi_2$ are completely
		positive or completely co-positive. We find necessary and
		(separately) sufficient conditions for $n$-tensor-stable positive
		qubit maps, i.e. such qubit maps $\Phi$ that $\Phi^{\otimes n}$ is
		positive. Particular cases of 2- and 3-tensor-stable positive
		qubit maps are fully characterized, and the decomposability of
		2-tensor-stable positive qubit maps is discussed. The case of
		non-unital maps is reduced to the case of appropriate unital maps.
		Finally, $n$-tensor-stable positive maps are used in
		characterization of multipartite entanglement, namely, in the
		entanglement depth detection.
	\end{abstract}
	
	\maketitle
	
	
	\section{\label{section-introduction} Introduction}
	
	Tensor product structures play a vital role in quantum information
	theory: entanglement of quantum states is defined with respect to
	a particular bipartition~\cite{werner-1989} or multipartition
	(see, e.g., the reviews~\cite{horodecki-2009,guhne-toth-2009});
	communication via quantum channels involves multiple uses of the
	same channel, which results in the map of the form $\Phi^{\otimes
		n}$ (see, e.g.,~\cite{holevo-giovannetti-2012}); propagation of
	multipartite physical signals through separated communication
	lines $\Phi_1$ and $\Phi_2$ is described by a tensor product of
	corresponding maps $\Phi_1 \otimes \Phi_2$; local operations and
	measurements have the tensor product structure too. Properties of
	quantum channels may drastically change with tensoring as it takes
	place, for instance, in superactivation of zero-error capacities
	\cite{cubitt-2011,shirokov-2015}.
	
	Positive maps, in their turn, are an important auxiliary tool in
	quantum information theory and are widely used in the analysis of
	bipartite
	entanglement~\cite{peres-1996,horodecki-1996,terhal-2001,chen-2004,breuer-2006,hall-2006,chruscinski-kossakowski-2007,ha-2011,chruscinski-sarbicki-2014,collins-2016},
	multipartite entanglement~\cite{huber-2014,lancien-2015},
	entanglement
	distillation~\cite{horodecki-1998,horodecki-1999,divincenzo-2000,clarisse-2005},
	distinguishability of bipartite
	states~\cite{yu-2014,horodecki-2015,bandyopadhyay-2015},
	description of open system
	dynamics~\cite{shaji-2005,carteret-2008,chruscinski-2014},
	monotonicity of relative entropy~\cite{muller-hermes-2015}, and
	evaluation of quantum channel
	capacities~\cite{holevo-werner-2001}.
	
	Positivity of linear maps under tensor powers was analyzed in the
	recent seminal paper~\cite{muller-hermes-2016}, where the notions
	of $n$-tensor-stable positive and tensor-stable positive maps were
	introduced. Tensor-stable positive maps were found to provide new
	bounds on quantum channel capacities.
	
	The aim of this paper is to study positivity of the tensor product
	maps $\Phi_1 \otimes \Phi_2$, where both $\Phi_1$ and $\Phi_2$ are
	qubit maps (${\cal M}_2 \mapsto {\cal M}_2$). We focus special
	attention on 2-tensor-stable positive maps $\Phi$, i.e. such maps
	$\Phi$ that $\Phi^{\otimes 2}$ is positive, and then extend our
	results to 3- and $n$-tensor-stable positive maps.
	
	The paper is organized as follows.
	
	In Sec.~\ref{section-notation}, we review notations and general
	properties of linear maps, and formulate some sufficient
	conditions for positivity of the tensor product map $\Phi_1
	\otimes \Phi_2$. In Sec.~\ref{section-depolarizing}, an exact
	characterization of bipartite locally depolarizing positive maps
	is presented. In Sec.~\ref{section-unital-general}, sufficient
	conditions for positivity of the tensor product unital map $\Phi_1
	\otimes \Phi_2$ are derived. In
	Sec.~\ref{section-2-tensor-stable-positive-unital}, we find the
	necessary and sufficient condition for 2-tensor positivity of
	unital qubit maps. Sec.~\ref{section-decomposable} is devoted to the question of decomposability of the tensor products of qubit maps. In
	Sec.~\ref{section-2-tensor-stable-positive-non-unital},
	2-tensor-stable positivity of non-unital qubit maps is studied by
	a reduction to the problem of 2-tensor-stable positivity of
	corresponding unital maps. In
	Sec.~\ref{section-3-tensor-stable-positive-unital}, criteria for
	3-tensor positivity of unital qubit maps are found and checked
	numerically. In Sec.~\ref{section-n-tsp}, we find necessary and
	(separately) sufficient conditions for $n$-tensor-stable positive
	maps. Sec.~\ref{section-witness} is devoted to witnessing
	particular forms of multipartite entanglement via
	$n$-tensor-stable positive maps. In
	Sec.~\ref{section-conclusions}, brief conclusions are given.
	
	
	\section{\label{section-notation} Notations and general properties}
	Consider a finite dimensional Hilbert space (unitary space) ${\cal
		H}_d$, ${\rm dim}{\cal H} = d$, and the set ${\cal B}({\cal H}_d)$
	of operators acting on ${\cal H}_d$. The operator $R\in{\cal
		B}({\cal H}_d)$ is called positive semidefinite if $\bra{\psi} R
	\ket{\psi} \geqslant 0$ for all vectors $\ket{\psi}\in{\cal H}_d$
	(hereafter we use the Dirac notation). For positive semidefinite
	operators $R$ we write $R \geqslant 0$. We will denote the set of
	all positive semidefinite operators by ${\cal B}({\cal H}_d)^{+}$.
	The linear map $\Phi: {\cal B}({\cal H}_d)^{+} \mapsto {\cal
		B}({\cal H}_d)^{+}$ is called positive. By ${\rm Id}_k$ denote the
	identity transformation on ${\cal B}({\cal H}_k)$. The linear map
	$\Phi$ is called $k$-positive if the map $\Phi\otimes{\rm Id}_k$
	is positive. 2-positive maps of the form $\Phi^{\otimes n}$ are
	analyzed in the paper~\cite{stormer-2010} and play an important
	role in the distillation problem~\cite{divincenzo-2000}. A linear
	map $\Phi$ is called completely positive if it is $k$-positive for
	all $k \in \mathbb{N}$ (see, e.g.,~\cite{breuer-petruccione}). By
	$\top_d$ we denote the transposition map on ${\cal B}({\cal H}_d)$
	associated with some orthonormal basis $\{\ket{i}\}_{i=1}^d$ in
	${\cal H}_d$, $X^{\top}=\sum_{i,j} \ket{i} \bra{j} X \ket{i}
	\bra{j}$. Maps of the form $\top\circ\Phi$, where $\Phi$ is
	completely positive, are called completely co-positive. Duality
	relations between cones of different maps are discussed, e.g.,
	in~\cite{skowronek-2009}.
	
	A linear map $\Phi: {\cal B}({\cal H}_d) \mapsto {\cal B}({\cal
		H}_d)$ is called $n$-tensor-stable positive if the map
	$\Phi^{\otimes n}$ is positive~\cite{muller-hermes-2016}.
	Obviously, if $m > n$, then the set of $n$-tensor-stable positive
	maps comprises the set of $m$-tensor-stable positive maps (nested
	structure). A linear map $\Phi: {\cal B}({\cal H}_d) \mapsto {\cal
		B}({\cal H}_d)$ is called tensor-stable positive (or tensor
	product positive) if it is $n$-tensor-stable positive for all $n
	\in {\mathbb N}$~\cite{muller-hermes-2016,hayashi-2006}.
	Completely positive and completely co-positive maps $\Phi$ are
	trivial tensor-stable positive maps~\cite{muller-hermes-2016}.
	
	In subsequent sections, we exploit some properties of maps with
	regard to their action on entangled states. Quantum states are
	described by density operators, i.e. positive semidefinite
	operators $\varrho \in {\cal B}({\cal H}_d)^+$ with unit trace,
	${\rm tr}\varrho = \sum_{i=1}^d \bra{i}\varrho\ket{i} =1$. A
	positive semidefinite operator $R \in ({\cal B}({\cal H}_{d_1})
	\otimes {\cal B}({\cal H}_{d_2}))^+$ is called
	separable~\cite{werner-1989} if it can be represented in the form
	$\varrho = \sum_{k} R_k^{(1)} \otimes R_k^{(2)}$, where $R_k^{(1)}
	\in {\cal B}({\cal H}_{d_1})^+$ and $R_k^{(2)} \in {\cal B}({\cal
		H}_{d_2})^+$, otherwise $R$ is called entangled. Denote the cone
	of separable operators by ${\cal S}({\cal H}_{d_1}\otimes{\cal
		H}_{d_2})$. We will refer to completely positive maps $\Phi: {\cal
		B}({\cal H}_d) \mapsto {\cal B}({\cal H}_d)$ of the form $\Phi[X]
	= \sum_j {\rm tr}[E_j X] R_j$ with $E_j,R_j \geqslant 0$ as
	entanglement breaking (quantum--classical--quantum,
	measure-and-prepare)~\cite{holevo-1998,king-2002,shor-2002,ruskai-2003,horodecki-2003,holevo-2008}.
	A positive map $\Phi: ({\cal B}({\cal H}_{d_1}) \otimes {\cal
		B}({\cal H}_{d_2}))^+ \mapsto {\cal S}({\cal H}_{d_1}\otimes{\cal
		H}_{d_2})$ is called positive entanglement
	annihilating~\cite{moravcikova-ziman-2010,filippov-ziman-2013,filippov-melnikov-ziman-2013,filippov-ziman-2014}.
	
	Action of a linear map $\Phi: {\cal B}({\cal H}_d) \mapsto {\cal
		B}({\cal H}_d)$ can be defined through the Choi operator
	$\Omega_{\Phi} \in {\cal B}({\cal H}_d) \otimes {\cal B}({\cal
		H}_d)$ via the so-called Choi-Jamio{\l}kowski isomorphism
	~\cite{pillis-1967,jamiolkowski-1972,choi-1975} reviewed
	in~\cite{jiang-2013,majewski-2013}:
	\begin{eqnarray}
	&& \label{choi-matrix} \Omega_{\Phi} = (\Phi \otimes {\rm
		Id}_d) [\ket{\psi_+}\bra{\psi_+}], \\
	&& \label{map-through-choi} \Phi [X] = d \, {\rm tr}_2 [
	\Omega_{\Phi} (I \otimes X^{\top}) ],
	\end{eqnarray}
	
	\noindent where $\ket{\psi_+} = \frac{1}{\sqrt{d}} \sum_{i=1}^{d}
	\ket{i} \otimes \ket{i}$ is a maximally entangled state, $I$ is
	the identity operator on ${\cal H}_d$, ${\rm tr}_2 [Y] =
	\sum_{i=1}^d (I \otimes \bra{i}) Y (I \otimes \ket{i})$ denotes
	the partial trace operation for operators $Y \in {\cal B}({\cal
		H}) \otimes {\cal B}({\cal H})$.
	
	Let us remind the known properties of Choi operator:
	
	\begin{enumerate}
		\item $\Phi$ is positive if and only if $\Omega_{\Phi}$ is
		block-positive, i.e. $\bra{\varphi} \otimes \bra{\chi}
		\Omega_{\Phi} \ket{\varphi} \otimes \ket{\chi} \geqslant 0$ for
		all $\ket{\varphi},\ket{\chi}\in{\cal
			H}_d$~\cite{jamiolkowski-1972};
		
		\item $\Phi$ is completely positive (quantum operation) if and
		only if $\Omega_{\Phi} \geqslant 0$~\cite{choi-1975};
		
		\item $\Phi$ is entanglement breaking if and only if
		$\Omega_{\Phi}$ is separable (see, e.g.,~\cite{horodecki-2003});
		
		\item $\Phi$ is positive entanglement annihilating if and only if
		${\rm tr}[\Omega_{\Phi} \xi_{1|2} \otimes R_{12} ] \geqslant 0$
		for all $R \in ({\cal B}({\cal H}_{d_1}) \otimes {\cal B}({\cal
			H}_{d_2}))^+$ and all block-positive operators $\xi_{1|2} \in
		{\cal B}({\cal H}_{d_1}) \otimes {\cal B}({\cal
			H}_{d_2})$~\cite{filippov-ziman-2013}.
	\end{enumerate}
	
	The general problem addressed in this paper is to determine under
	which conditions a tensor product $\Phi_1 \otimes \Phi_2$ of two
	linear maps $\Phi_1: {\cal B}({\cal H}_{d_1}) \mapsto {\cal
		B}({\cal H}_{d_1})$ and $\Phi_2: {\cal B}({\cal H}_{d_2}) \mapsto
	{\cal B}({\cal H}_{d_2})$ is a positive map. Acting on a
	factorized positive operator $R_1 \otimes R_2 \geqslant 0$, it is
	not hard to see that the positivity of maps $\Phi_1$ and $\Phi_2$
	is a necessary condition. This condition, however, is not
	sufficient in general as $({\cal B}({\cal H}_{d_1}))^+ \otimes
	({\cal B}({\cal H}_{d_2}))^+ \subsetneq ({\cal B}({\cal H}_{d_1})
	\otimes {\cal B}({\cal H}_{d_2}))^+$. (Characterization of the
	cone $({\cal B}({\cal H}_{d_1}))^+ \otimes ({\cal B}({\cal
		H}_{d_2}))^+$ is given in Ref.~\cite{majewski-2001}.) For
	instance, the maps $\Phi_1 = {\rm Id}$ and $\Phi_2 = \top$ are
	both positive, but the map $\Phi_1 \otimes \Phi_2 = {\rm Id}
	\otimes \top$ is not positive. An apparent sufficient condition
	for positivity of the map $\Phi_1 \otimes \Phi_2$ is
	$\{\Phi_1\otimes\Phi_2$ is completely positive or completely
	co-positive$\}$, which takes place if $\Phi_1$ and $\Phi_2$ are
	both completely positive, or if $\Phi_1$ and $\Phi_2$ are both
	completely co-positive.
	
	Generalization of the problem to a number of maps $\Phi_1, \ldots,
	\Phi_n$ is to determine when the map $\bigotimes_{k=1}^{n} \Phi_n$
	is positive. Setting all the maps $\Phi_i$ to be identical
	($\Phi_i = \Phi$), we get the problem of characterizing
	$n$-tensor-stable positive maps posed in
	Ref.~\cite{muller-hermes-2016}.
	
	We restrict our analysis to the case of linear qubit maps $\Phi_i:
	{\cal B}({\cal H}_2) \mapsto {\cal B}({\cal H}_2)$. It was shown
	in Ref.~\cite{muller-hermes-2016} that all tensor-stable positive
	qubit maps are trivial (completely positive or completely
	co-positive). However, $n$-tensor-stable positive qubit maps for a
	fixed $n$ are not necessarily trivial and their characterization
	is still missing, so we partially fill this gap in the present
	paper. Also, we provide a full characterization for the cases
	$n=2$ and $n=3$. First, we obtain results for unital maps, i.e.
	such linear maps $\Phi$ that $\Phi[I]=I$. Then, we extend these
	results to the case of non-unital maps.
	
	Denote the concatenation of two maps $\Phi$ and $\Lambda$ by $\Phi
	\circ \Lambda$, i.e. $(\Phi \circ \Lambda) [X] = \Phi \left[
	\Lambda[X] \right]$.
	
	\begin{proposition}
		\label{proposition-positive-one-sided} Suppose a map
		$\Phi_1\otimes\Phi_2$ is positive entanglement annihilating and
		${\cal P}$ is positive, then the maps $\Phi_1\otimes({\cal
			P}\circ\Phi_2)$ and $({\cal P}\circ\Phi_1)\otimes\Phi_2$ are
		positive.
	\end{proposition}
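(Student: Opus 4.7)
The plan is to exploit the defining property of positive entanglement annihilating maps: the output $(\Phi_1\otimes\Phi_2)[R]$ of any positive semidefinite input $R$ lies in the separable cone ${\cal S}({\cal H}_{d_1}\otimes{\cal H}_{d_2})$. From there, positivity of $\Phi_1\otimes({\cal P}\circ\Phi_2)$ follows because positive maps preserve the positivity of each tensor factor individually, and then sums of positive operators remain positive.

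Concretely, I would first use the identity
\begin{equation}
\Phi_1\otimes({\cal P}\circ\Phi_2) = ({\rm Id}_{d_1}\otimes{\cal P})\circ(\Phi_1\otimes\Phi_2),
\end{equation}
so that for any $R\in({\cal B}({\cal H}_{d_1})\otimes{\cal B}({\cal H}_{d_2}))^+$ one has $[\Phi_1\otimes({\cal P}\circ\Phi_2)][R] = ({\rm Id}_{d_1}\otimes{\cal P})\bigl[(\Phi_1\otimes\Phi_2)[R]\bigr]$. By hypothesis $(\Phi_1\otimes\Phi_2)[R]$ is separable, hence admits a decomposition $\sum_k A_k\otimes B_k$ with $A_k,B_k\geqslant 0$.

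Next I would apply ${\rm Id}_{d_1}\otimes{\cal P}$ term by term to this decomposition, obtaining $\sum_k A_k\otimes{\cal P}[B_k]$. Since ${\cal P}$ is positive, each ${\cal P}[B_k]\geqslant 0$, so each summand is a tensor product of positive semidefinite operators and therefore positive semidefinite; the sum of positive semidefinite operators is positive semidefinite, which establishes positivity of $\Phi_1\otimes({\cal P}\circ\Phi_2)$. The statement for $({\cal P}\circ\Phi_1)\otimes\Phi_2$ follows by the symmetric argument using $({\cal P}\otimes{\rm Id}_{d_2})\circ(\Phi_1\otimes\Phi_2)$.

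There is really no serious obstacle here: the whole content of the proposition is that separability of the intermediate output shields the subsequent positive map ${\cal P}$ from needing any complete-positivity assumption, because ${\cal P}$ never has to act on entangled inputs. The only point that requires minor care is the passage from a single $R\geqslant 0$ to its image being a (possibly non-normalized) separable operator with the standard decomposition into positive tensor factors, which is exactly the definition of the separable cone recalled earlier in Sec.~\ref{section-notation}.
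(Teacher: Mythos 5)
Your proposal is correct and follows essentially the same route as the paper: use the separable decomposition $\sum_k R_k^{(1)}\otimes R_k^{(2)}$ of $(\Phi_1\otimes\Phi_2)[R]$ guaranteed by the entanglement-annihilating hypothesis, apply ${\cal P}$ to each second factor, and conclude positivity termwise. The only cosmetic difference is that you make the factorization $\Phi_1\otimes({\cal P}\circ\Phi_2)=({\rm Id}\otimes{\cal P})\circ(\Phi_1\otimes\Phi_2)$ explicit, which the paper leaves implicit.
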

	\begin{proof}
		By definition of positive entanglement annihilating map, for any
		positive semidefinite operator $R$ we have:
		$(\Phi_1\otimes\Phi_2)[R] = \sum_{k} R_k^{(1)} \otimes R_k^{(2)}
		\geqslant 0$, where $R_k^{(1)} \geqslant 0$ and $R_k^{(2)}
		\geqslant 0$. Since ${\cal P}[R_k^{(2)}] \geqslant 0$, the
		operator $\left( \Phi_1\otimes({\cal P}\circ\Phi_2) \right) [R] =
		\sum_{k} R_k^{(1)} \otimes {\cal P}[R_k^{(2)}] \geqslant 0$ for
		all $R \geqslant 0$. Similarly, $\left( ({\cal
			P}\circ\Phi_1)\otimes\Phi_2 \right) [R] \geqslant 0$ for all $R \geqslant 0$.
	\end{proof}
	
	Proposition~\ref{proposition-positive-one-sided} enables one to
	use known criteria for entanglement-annihilating
	maps~\cite{filippov-ziman-2013,filippov-rybar-ziman-2012} to find
	corresponding criteria for positive maps. Particular results of
	that kind are found in Sec.~\ref{section-unital-general}.
	
	\begin{proposition}
		\label{proposition-eb-otimes-positive} If $\Phi_1$ is entanglement
		breaking and $\Phi_2$ is positive, then the map
		$\Phi_1\otimes\Phi_2$ is positive.
	\end{proposition}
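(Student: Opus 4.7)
The plan is to use the measure-and-prepare (quantum--classical--quantum) form of entanglement breaking maps recalled in the excerpt, namely $\Phi_1[X] = \sum_j {\rm tr}[E_j X]\, R_j$ with $E_j, R_j \geq 0$. The key observation is that entanglement breaking already forces $\Phi_1 \otimes {\rm Id}$ to output a separable operator on any positive input, and then tensoring with a positive map $\Phi_2$ on the second factor preserves positivity of each separable summand.

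Concretely, for any $R \in ({\cal B}({\cal H}_{d_1}) \otimes {\cal B}({\cal H}_{d_2}))^+$, I would verify by linearity on factorized operators $X \otimes Y$ the identity
\begin{equation*}
(\Phi_1 \otimes {\rm Id}_{d_2})[R] = \sum_j R_j \otimes {\rm tr}_1[(E_j \otimes I)\, R],
\end{equation*}
where ${\rm tr}_1$ denotes the partial trace over the first tensor factor. Each operator ${\rm tr}_1[(E_j \otimes I)R]$ is positive semidefinite, since by cyclicity of the partial trace on the first subsystem it equals ${\rm tr}_1[(\sqrt{E_j} \otimes I)\, R\, (\sqrt{E_j} \otimes I)]$, a partial trace of a manifestly positive operator.

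Applying ${\rm Id}_{d_1} \otimes \Phi_2$ to both sides then yields
\begin{equation*}
(\Phi_1 \otimes \Phi_2)[R] = \sum_j R_j \otimes \Phi_2\!\left[{\rm tr}_1[(E_j \otimes I)\, R]\right].
\end{equation*}
Since $\Phi_2$ is positive, $\Phi_2[{\rm tr}_1[(E_j \otimes I)R]] \geq 0$, so every summand is a tensor product of positive operators and the total sum is positive semidefinite. There is no substantive obstacle here: entanglement breaking provides exactly the amount of structure on $\Phi_1$ needed to reduce the positivity of $\Phi_1 \otimes \Phi_2$ to the local positivity of $\Phi_2$ alone.
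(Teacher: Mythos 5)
Your proof is correct and follows essentially the same route as the paper: both arguments rest on the fact that $(\Phi_1\otimes{\rm Id})[R]$ is separable for every positive semidefinite $R$, after which applying the positive map $\Phi_2$ to each second factor of the separable decomposition preserves positive semidefiniteness term by term. The only difference is that you derive the separable decomposition explicitly from the measure-and-prepare form $\Phi_1[X]=\sum_j {\rm tr}[E_j X]R_j$ (with a correct verification that ${\rm tr}_1[(E_j\otimes I)R]\geqslant 0$), whereas the paper simply invokes separability of the output of $\Phi_1\otimes{\rm Id}$ as a known property of entanglement breaking maps.
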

	\begin{proof}
		Since $\Phi_1$ is entanglement breaking, the operator
		$(\Phi_1\otimes{\rm Id})[R] = \sum_{k} R_k^{(1)} \otimes
		R_k^{(2)}$ is separable for any positive semidefinite $R$. Then,
		$(\Phi_1\otimes\Phi_2)[R] = \sum_{k} R_k^{(1)} \otimes
		\Phi_2[R_k^{(2)}] \geqslant 0$ in view of positivity of $\Phi_2$.
	\end{proof}
	
	
	\section{\label{section-depolarizing} Depolarizing qubit maps}
	To illustrate the problem of positivity of tensor product maps,
	let us consider an exactly solvable case of depolarizing qubit
	maps. The action of a depolarizing qubit map ${\cal D}_{q}$ is
	defined as follows:
	\begin{equation}
	{\cal D}_{q}[X] = q X + (1-q) {\rm tr}[X] \frac{1}{2} I,
	\end{equation}
	
	\noindent The map ${\cal D}_{q}$ is known to be positive if $q \in
	[-1,1]$ and completely positive if $q \in [-\frac{1}{3},1]$ (see,
	e.g.,~\cite{king-ruskai-2001,ruskai-2002}). In what follows, we
	analyze when the two-qubit map ${\cal D}_{q_1}\otimes{\cal
		D}_{q_2}$ is positive. Entanglement-annihilating properties of the
	map ${\cal D}_{q_1}\otimes{\cal D}_{q_2}$ and their
	generalizations (acting in higher dimensions) are considered in
	papers~\cite{filippov-ziman-2013,filippov-2014,lami-huber-2016}.
	
	Due to the convex structure of positive operators, if $({\cal
		D}_{q_1}\otimes{\cal D}_{q_2}) [\ket{\psi}\bra{\psi}] \geqslant 0$
	for all $\ket{\psi} \in {\cal H}_2 \otimes {\cal H}_2$, then the
	map ${\cal D}_1\otimes{\cal D}_2$ is positive. Since the norm of a
	vector $\ket{\psi}$ is not relevant for the analysis of
	positivity, let us consider pure input states
	$\omega=\ket{\psi}\bra{\psi}$ with $\ip{\psi}{\psi} = 1$. We use
	the Schmidt decomposition
	$\ket{\psi}=\sqrt{p}\ket{\phi\otimes\chi}+\sqrt{p_\perp}\ket{\phi_{\perp}\otimes\chi_{\perp}}$,
	where $\{\ket{\phi},\ket{\phi_{\perp}}\}$ and
	$\{\ket{\chi},\ket{\chi_{\perp}}\}$ are suitable orthonormal bases
	in Hilbert spaces of the first and second qubits, respectively,
	and $p$ and $p_{\perp}$ are real non-negative numbers such that
	$p+p_{\perp}=1$.
	
	\begin{figure}
		\includegraphics[width=8cm]{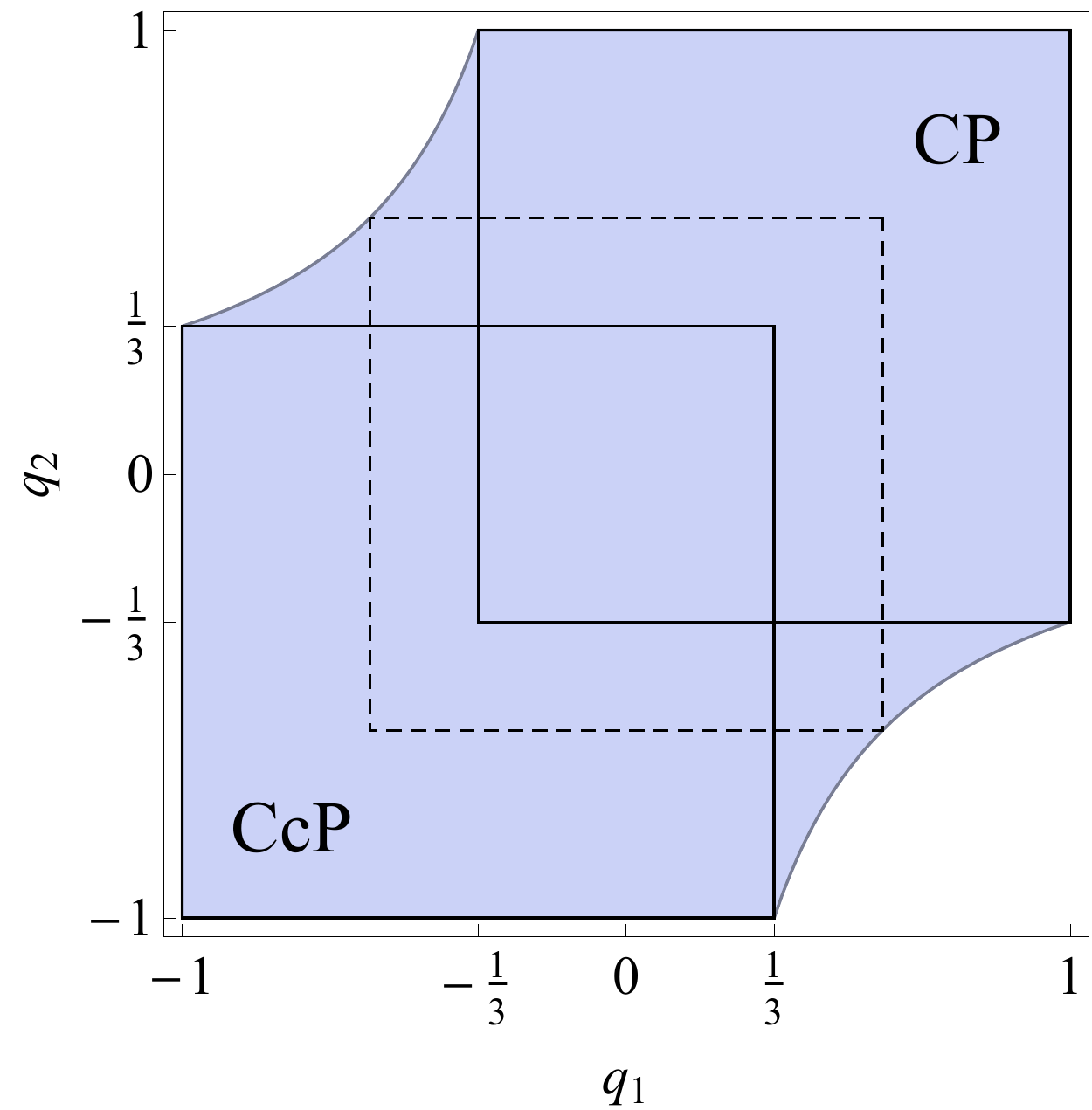}
		\caption{\label{figure-depolarizing} Shaded area is the region of
			parameters $q_1$ and $q_2$, where the map ${\cal
				D}_{q_1}\otimes{\cal D}_{q_2}$ is positive. Solid line regions
			correspond to completely positive (CP) and completely co-positive
			(CcP) maps ${\cal D}_{q_1}\otimes{\cal D}_{q_2}$.
			Proposition~\ref{proposition-unital-positive} detects positivity
			of the map ${\cal D}_{q_1}\otimes{\cal D}_{q_2}$ inside the dashed
			line region.}
	\end{figure}
	
	Action of the two-qubit map ${\cal D}_1\otimes{\cal D}_2$ on
	$\omega$ yields
	\begin{eqnarray}
	&&\omega_{\rm out} = ({\cal D}_{q_1}\otimes{\cal D}_{q_2})
	[\omega] = q_1 q_2 \omega +
	\frac{1}{2} (1-q_1) q_2 I \otimes \omega_2  \nonumber  \\
	&& + \frac{1}{2} q_1 (1-q_2) \omega_1\otimes I + \frac{1}{4}
	(1-q_1)(1-q_2) I \otimes I,
	\end{eqnarray}
	with the reduced states $\omega_1 = p \ket{\phi}\bra{\phi} +
	p_{\perp} \ket{\phi_{\perp}}\bra{\phi_{\perp}}$ and $\omega_2 = p
	\ket{\chi}\bra{\chi} + p_{\perp}
	\ket{\chi_{\perp}}\bra{\chi_{\perp}}$. The condition $\omega_{\rm
		out} \geqslant 0 $ reduces to
	\begin{equation}
	\label{depolarizing-intermediate-condition}
	\left(%
	\begin{array}{cccc}
	A_{+} + B_{+} & 0 & 0 & C \\
	0 & A_{-} + B_{-} & 0 & 0 \\
	0 & 0 & A_{-} - B_{-} & 0 \\
	C & 0 & 0 & A_{+} - B_{+} \\
	\end{array}%
	\right) \geqslant 0,
	\end{equation}
	
	\noindent where $A_{\pm} = 1 \pm q_1 q_2$, $B_{\pm} = (2p-1)(q_1
	\pm q_2)$, and $C = 4 \sqrt{p(1-p)} \, q_1 q_2$. After some
	algebra, we obtain that the condition
	\eqref{depolarizing-intermediate-condition} holds true for all $0
	\leqslant p \leqslant 1$ if
	\begin{equation}
	\label{depolarizing-condition} q_1 q_2 \geqslant -\frac{1}{3},
	\qquad -1 \leqslant q_1 \leqslant 1, \qquad -1 \leqslant q_2
	\leqslant 1.
	\end{equation}
	
	Inequalities~\eqref{depolarizing-condition} define the conditions
	under which the two-qubit map ${\cal D}_{q_1}\otimes{\cal
		D}_{q_2}$ is positive. We depict the corresponding area of
	parameters $q_1$ and $q_2$ in Fig.~\ref{figure-depolarizing}. Note
	that ${\cal D}_{q_1}\otimes{\cal D}_{q_2}$ is completely positive
	if $-\frac{1}{3} \leqslant q_1, q_2 \leqslant 1$. Analogously,
	${\cal D}_{q_1}\otimes{\cal D}_{q_2}$ is completely co-positive if
	$-1 \leqslant q_1, q_2 \leqslant \frac{1}{3}$.
	
	Let us demonstrate the use of
	Proposition~\ref{proposition-positive-one-sided}. Consider the
	reduction map ${\cal R}[X] = {\rm tr}[X] I - X$, which is known to
	be positive in qubit case~\cite{horodecki-1999}. The concatenation
	${\cal R} \circ {\cal D}_{q} = {\cal D}_{-q}$, i.e. the
	depolarizing map with parameter $-q$. The map ${\cal
		D}_{q_1}\otimes{\cal D}_{-q_2}$ is known to be positive
	entanglement annihilating if $q_1(-q_2) \leqslant \frac{1}{3}$ and
	$-1 \leqslant q_1, q_2 \leqslant 1$. According to
	Proposition~\ref{proposition-positive-one-sided}, these relations
	are sufficient for positivity of the map ${\cal D}_{q_1} \otimes
	({\cal R}\circ{\cal D}_{-q_2}) = {\cal D}_{q_1}\otimes{\cal
		D}_{q_2}$. In this particular case, these relations turn out to be
	the same as the necessary and sufficient conditions
	\eqref{depolarizing-condition}.
	
	
	\section{\label{section-unital-general} Unital qubit maps}
	
	A unital qubit map $\Phi$ satisfies $\Phi[I]=I$ and can be
	expressed in the form~\cite{king-ruskai-2001,ruskai-2002}
	\begin{equation}
	\label{Phi-through-Upsilon} \Phi[X] = W(\Upsilon[V X
	V^{\dag}])W^{\dag},
	\end{equation}
	
	\noindent where $V$ and $W$ are appropriate unitary operators such
	that the map $\Upsilon$ has the Pauli form, i.e.
	\begin{equation}
	\label{Upsilon}
	\Upsilon[X]=\frac{1}{2}\sum_{j=0}^{3}\lambda_{j}{\rm
		tr}[\sigma_{j}X]\sigma_{j} = \sum_{j=0}^3 q_j \sigma_j X \sigma_j,
	\end{equation}
	
	\noindent where $\sigma_0=I$ and $\{\sigma_{i}\}_{i=1}^{3}$ is a
	conventional set of Pauli operators. Thus, up to a unitary
	preprocessing $(V \cdot V^{\dag})$ and postprocessing $(W \cdot
	W^{\dag})$ the unital map $\Phi$ reduces to the map $\Upsilon$.
	From Eq.~\eqref{Phi-through-Upsilon} it is not hard to see that
	the two-qubit unital map $\Phi_1 \otimes \Phi_2$ is positive if
	and only if $\Upsilon_1 \otimes \Upsilon_2$ is positive. In this
	section, we will consider properties of maps $\Upsilon_1 \otimes
	\Upsilon_2$.
	
	The relation between parameters $\{\lambda_j\}$ and $\{q_j\}$ in
	formula~\eqref{Upsilon} is given by
	\begin{equation}
	\left(%
	\begin{array}{c}
	q_0 \\
	q_1 \\
	q_2 \\
	q_3 \\
	\end{array}%
	\right) = \frac{1}{4} \left(%
	\begin{array}{cccc}
	1 & 1  & 1  & 1 \\
	1 & 1  & -1 & -1 \\
	1 & -1 & 1  & -1 \\
	1 & -1 & -1 & 1 \\
	\end{array}%
	\right) \left(%
	\begin{array}{c}
	\lambda_0 \\
	\lambda_1 \\
	\lambda_2 \\
	\lambda_3 \\
	\end{array}%
	\right),
	\end{equation}
	
	\noindent i.e. ${\bf q} = \frac{1}{2}H \boldsymbol{\lambda}$,
	where ${\bf q} = (q_0,q_1,q_2,q_3)^{\top}$, $H$ is the $4 \times
	4$ Hadamard matrix, and $\boldsymbol{\lambda} =
	(\lambda_0,\lambda_1,\lambda_2,\lambda_3)^{\top}$.
	
	For hermicity-preserving maps $\Upsilon$ the parameters
	$\{\lambda_j\}$ and $\{q_j\}$ are real. Positive maps correspond
	to parameters $\lambda_0 \geqslant 0$, $-\lambda_0 \leqslant
	\lambda_1,\lambda_2,\lambda_3 \leqslant \lambda_0$. Completely
	positive maps correspond to $q_j \geqslant 0$, $j=0,\ldots,3$.
	Trace preserving maps are those with $\lambda_0 = 1$. Completely
	positive trace preserving unital qubit maps are called unital
	qubit channels and are essentially the random unitary
	channels~\cite{audenaert-scheel-2008}.
	
	\begin{proposition}
		\label{proposition-unital-positive} A unital two-qubit map
		$\Upsilon_1 \otimes \Upsilon_2$ is positive if $\Upsilon_1^2$ and
		$\Upsilon_2^2$ are both entanglement breaking.
	\end{proposition}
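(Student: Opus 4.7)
The plan is to combine the entanglement-breaking hypothesis on the squares $\Upsilon_i^2$ with Proposition~\ref{proposition-eb-otimes-positive} through an explicit composition or factorisation of $\Upsilon_1\otimes\Upsilon_2$, falling back to a direct Choi--Bloch calculation if that is not available.

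First I would translate the hypothesis into an eigenvalue condition. Because $\Upsilon_i^2$ is itself a Pauli map with eigenvalues $\bigl((\lambda_0^{(i)})^2,(\lambda_1^{(i)})^2,(\lambda_2^{(i)})^2,(\lambda_3^{(i)})^2\bigr)$, its Choi operator is Bell-diagonal with the four eigenvalues $\frac{1}{4}\bigl((\lambda_0^{(i)})^2\pm(\lambda_1^{(i)})^2\pm(\lambda_2^{(i)})^2\pm(\lambda_3^{(i)})^2\bigr)$, and entanglement breaking (equivalent to separability for Bell-diagonal two-qubit states) forces all of these to be at most $\tfrac{1}{2}$. This yields the quantitative bound $\sum_{j=1}^{3}(\lambda_j^{(i)})^2\leq(\lambda_0^{(i)})^2$ and in particular $|\lambda_j^{(i)}|\leq\lambda_0^{(i)}$, so each $\Upsilon_i$ is automatically positive. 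The separable form of $\Omega_{\Upsilon_i^2}$ moreover supplies an explicit rank-one measure-and-prepare representation $\Upsilon_i^2[X]=\sum_k \ket{\alpha_k^{(i)}}\bra{\alpha_k^{(i)}}\,\bra{\beta_k^{(i)}}X\ket{\beta_k^{(i)}}$.

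The main step of the argument is then to attempt a factorisation $\Upsilon_i=\Lambda_i\circ\Psi_i$ with $\Lambda_i$ an entanglement breaking Pauli map and $\Psi_i$ a positive Pauli map, so that $(\Upsilon_1\otimes\Upsilon_2)[R]=(\Lambda_1\otimes\Lambda_2)\bigl[(\Psi_1\otimes\Psi_2)[R]\bigr]$ and an iterated application of Proposition~\ref{proposition-eb-otimes-positive} delivers positivity. A second possibility is to use Proposition~\ref{proposition-positive-one-sided}: starting from the fact that $\Upsilon_1^{\otimes 2}$ is positive entanglement annihilating whenever $\Upsilon_1^2$ is EB (a known property of unital qubit channels), one would compose with a suitable positive qubit map on one leg in order to replace one copy of $\Upsilon_1$ by $\Upsilon_2$.

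The main obstacle for both of these routes is a geometric mismatch between the hypothesis and the EB region. For Pauli qubit maps the EB region is the octahedron $|\lambda_1|+|\lambda_2|+|\lambda_3|\leq\lambda_0$, while the hypothesis $\sum_j\lambda_j^2\leq\lambda_0^2$ cuts out the larger unit ball, which is not contained in the convex hull of EB and co-EB Pauli maps (for instance $\boldsymbol\lambda=(1,1/\sqrt{2},1/\sqrt{2},0)$ satisfies the hypothesis yet lies outside that convex hull). Hence the factorisation $\Upsilon_i=\Lambda_i\circ\Psi_i$ with EB $\Lambda_i$ does not always exist, and no positive map will in general realise $\mathcal{P}\circ\Upsilon_1=\Upsilon_2$ while preserving the EB structure. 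I therefore expect the actual proof to proceed by testing block-positivity of $\Omega_{\Upsilon_1\otimes\Upsilon_2}$ directly: expanding pure test states $\ket{\phi},\ket{\chi}\in\mathcal{H}_2\otimes\mathcal{H}_2$ in the Pauli basis with Bloch coefficients $a_{\mu\nu}=\bra{\phi}\sigma_\mu\otimes\sigma_\nu\ket{\phi}$ and $b_{\mu\nu}=\bra{\chi}\sigma_\mu\otimes\sigma_\nu\ket{\chi}$, reducing positivity to $\sum_{\mu,\nu=0}^{3}\lambda_\mu^{(1)}\lambda_\nu^{(2)}a_{\mu\nu}b_{\mu\nu}\geq 0$, and bounding the $\mu,\nu\geq 1$ ``correlation'' term by Cauchy--Schwarz using the quadratic hypothesis together with the purity identities $\sum_{\mu\nu}a_{\mu\nu}^2=4$ and $\sum_\mu a_{\mu 0}^2=\sum_\nu a_{0\nu}^2$. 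Getting this bound tight in the Schmidt-rank-two regime for the test states, where the easy Proposition~\ref{proposition-eb-otimes-positive} argument for product inputs does not apply, will be the hardest part of the calculation.
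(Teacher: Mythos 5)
Your reduction of the hypothesis to $\sum_{j}(\lambda_j^{(i)})^2\leqslant(\lambda_0^{(i)})^2$ is correct, and so is your observation that no factorisation $\Upsilon_i=\Lambda_i\circ\Psi_i$ with $\Lambda_i$ entanglement breaking can exist in general (for Pauli maps it would force $\sum_j|\lambda_j^{(i)}|\leqslant\lambda_0^{(i)}$, a strictly smaller region than the ball). The genuine gap is that you abandon the route through Proposition~\ref{proposition-positive-one-sided} one step too early, and that is precisely the route the paper takes. Two things are missed. First, the imported entanglement-annihilation criterion (Proposition~1 of Ref.~\cite{filippov-rybar-ziman-2012}) is not restricted to $\Upsilon^{\otimes 2}$: it states that $\Upsilon_1\otimes\Upsilon_2'$ is positive entanglement annihilating whenever \emph{both} squares $\Upsilon_1^2$ and $(\Upsilon_2')^2$ are entanglement breaking, for two possibly different maps, so there is no need to ``replace one copy of $\Upsilon_1$ by $\Upsilon_2$.'' Second, the positive map $\mathcal{P}$ in Proposition~\ref{proposition-positive-one-sided} is not required to satisfy $\mathcal{P}\circ\Upsilon_1=\Upsilon_2$; what is needed is a positive involution that alters $\Upsilon_2$ without altering its square. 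The qubit reduction map $\mathcal{R}[X]={\rm tr}[X]I-X$ does exactly this: it is positive, $\mathcal{R}\circ\mathcal{R}={\rm Id}$, and on Pauli maps it merely flips the signs of $\lambda_1,\lambda_2,\lambda_3$, so $(\mathcal{R}\circ\Upsilon_2)^2=\Upsilon_2^2$ remains entanglement breaking. Hence $\Upsilon_1\otimes(\mathcal{R}\circ\Upsilon_2)$ is positive entanglement annihilating, and applying Proposition~\ref{proposition-positive-one-sided} with $\mathcal{P}=\mathcal{R}$ yields positivity of $\Upsilon_1\otimes(\mathcal{R}\circ\mathcal{R}\circ\Upsilon_2)=\Upsilon_1\otimes\Upsilon_2$.

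Your fallback computation does not close the gap either. The block-positivity condition $\sum_{\mu,\nu}\lambda_\mu^{(1)}\lambda_\nu^{(2)}a_{\mu\nu}b_{\mu\nu}\geqslant 0$ is the right reformulation, but the crude Cauchy--Schwarz estimate over the fifteen non-identity terms gives $\sqrt{3}\cdot\sqrt{3}=3$ against the single identity term $a_{00}b_{00}=1$ and therefore fails; making the bound work requires exploiting the detailed structure of the correlation tensors of pure two-qubit states, which is exactly the content of the entanglement-annihilation result you would be re-deriving. You acknowledge this step is not done, so as written the proposal contains a plan rather than a proof: the one viable route is incorrectly rejected, and the substitute argument has its hardest step left open.
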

	\begin{proof}
		Consider a map $\overline{\Upsilon}_2 = {\cal R} \circ
		\Upsilon_2$, where ${\cal R}$ is the qubit reduction map. Then
		$\overline{\Upsilon}_2^2 = \Upsilon_2^2$ is the entanglement
		breaking map. If $\Upsilon_1^2$ and $\overline{\Upsilon}_2^2$ are
		both entanglement breaking, then the map $\Upsilon_1 \otimes
		\overline{\Upsilon}_2$ is positive entanglement annihilating
		according to the Proposition~1 of
		Ref.~\cite{filippov-rybar-ziman-2012}. By
		Proposition~\ref{proposition-positive-one-sided} of the present
		paper, the map $\Upsilon_1 \otimes ({\cal R} \circ
		\overline{\Upsilon}_2) = \Upsilon_1 \otimes \Upsilon_2$ is
		positive.
	\end{proof}
	
	The ``power'' of Proposition~\ref{proposition-unital-positive} can
	be illustrated by the example of the local two-qubit depolarizing
	map ${\cal D}_{q_1} \otimes {\cal D}_{q_2}$. Since ${\cal D}_{q}^2
	= {\cal D}_{q^2}$, the maps ${\cal D}_{q_1}^2$ and ${\cal
		D}_{q_2}^2$ are both entanglement breaking if $q_1^2, q_2^2
	\leqslant \frac{1}{3}$~\cite{ruskai-2003}, i.e.
	$-\frac{1}{\sqrt{3}} \leqslant q_1, q_2 \leqslant
	\frac{1}{\sqrt{3}}$. Corresponding region of parameters is
	depicted in Fig.~\ref{figure-depolarizing}. Clearly,
	Proposition~\ref{proposition-unital-positive} provides only a
	sufficient but not a necessary condition for positivity.
	
	Being applied to the unital map $\Upsilon\otimes\Upsilon$,
	Proposition~\ref{proposition-unital-positive} guarantees that the
	map $\Upsilon\otimes\Upsilon$ is positive if
	$\lambda_1^2+\lambda_2^2+\lambda_3^2 \leqslant \lambda_0^2$
	($\Upsilon^2$ is entanglement breaking). The corresponding ball
	(for $\lambda_0 = 1$) is depicted in
	Fig.~\ref{figure-geometry-simple}. Usual powers of linear maps
	(self-concatenations $\Upsilon^{n}$) also find applications in
	quantum information theory, for instance, in noise quantification
	\cite{de-pasquale-2012,lami-giovannetti-2015}.
	
	\begin{figure}
		\includegraphics[width=8cm]{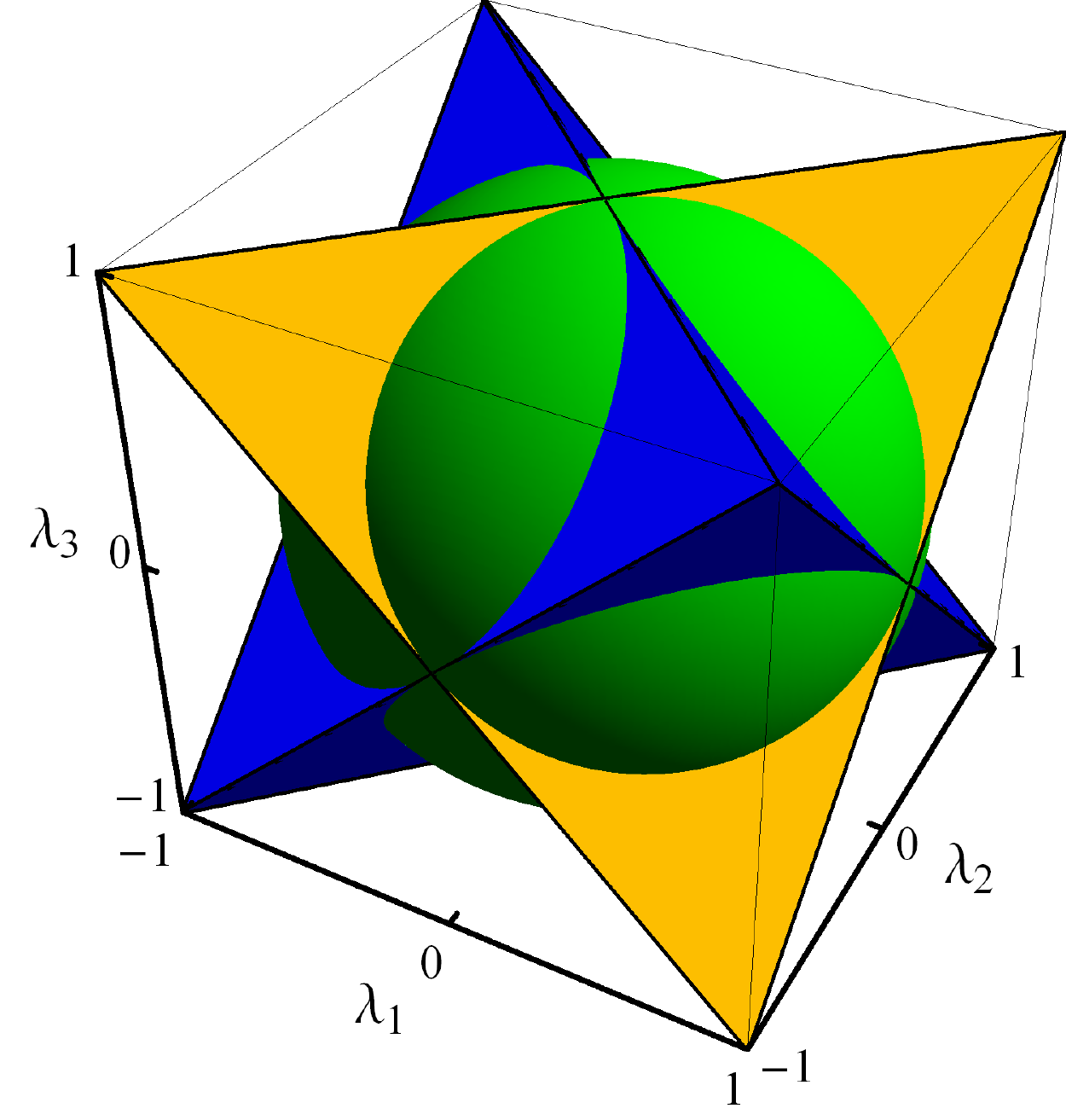}
		\caption{\label{figure-geometry-simple} Regions of parameters
			$\lambda_1,\lambda_2,\lambda_3$ determining particular properties
			of the unital qubit map $\Upsilon$ defined by Eq.~\eqref{Upsilon}.
			Tetrahedrons represent trivial tensor-stable positive maps (yellow
			one corresponds to completely positive maps, blue one corresponds
			to completely co-positive maps). Sphere corresponds to positive
			entanglement annihilating maps $\Upsilon\otimes\Upsilon$ (see
			Proposition~\ref{proposition-unital-positive}).}
	\end{figure}
	
	\begin{remark}
		Comparison of formulas \eqref{Phi-through-Upsilon} and
		\eqref{Upsilon} clarifies that, in general, $\Phi^2 \neq
		\Upsilon^2$. This fact is analogous to filtering $\Upsilon \circ
		{\cal U} \circ
		\Upsilon$~\cite{gavenda-2008,de-pasquale-2012,lami-giovannetti-2015},
		where the intermediate (unitary) map ${\cal U}$ is used to prevent
		$\Upsilon^2$ from becoming entanglement breaking.
	\end{remark}
	
	\begin{remark}
		In contrast to the entanglement annihilating property, which
		states that $\Upsilon_1 \otimes \Upsilon_2$ is positive
		entanglement annihilating if both $\Upsilon_1^2$ and
		$\Upsilon_2^2$ are entanglement breaking, complete positivity of
		the maps $\Upsilon_1^2$ and $\Upsilon_2^2$ does not imply that
		$\Upsilon_1 \otimes \Upsilon_2$ is positive. Counterexample is the
		case $\Upsilon_1 = {\rm Id}$ and $\Upsilon_2 = \top$.
	\end{remark}
	
	
	\section{\label{section-2-tensor-stable-positive-unital} 2-tensor-stable positive unital qubit maps}
	
	In this section, we analyze positivity of two-qubit unital maps
	$\Phi^{\otimes 2}$. By Eq.~\eqref{Phi-through-Upsilon}, $\Phi$ is
	2-tensor-stable positive if and only if $\Upsilon$ is
	2-tensor-stable positive. Without loss of generality one can
	impose the trace-preserving condition, $\lambda_0=1$, then the
	remaining three parameters $\{\lambda_j\}_{j=1}^{3}$ in formula
	\eqref{Upsilon} are scaling coefficients of the Bloch ball axes.
	Thus, the map $\Upsilon$ is given by a point in the Cartesian
	coordinate system $(\lambda_1,\lambda_2,\lambda_3)$ and can be
	readily visualized.
	
	\begin{figure}
		\includegraphics[width=8cm]{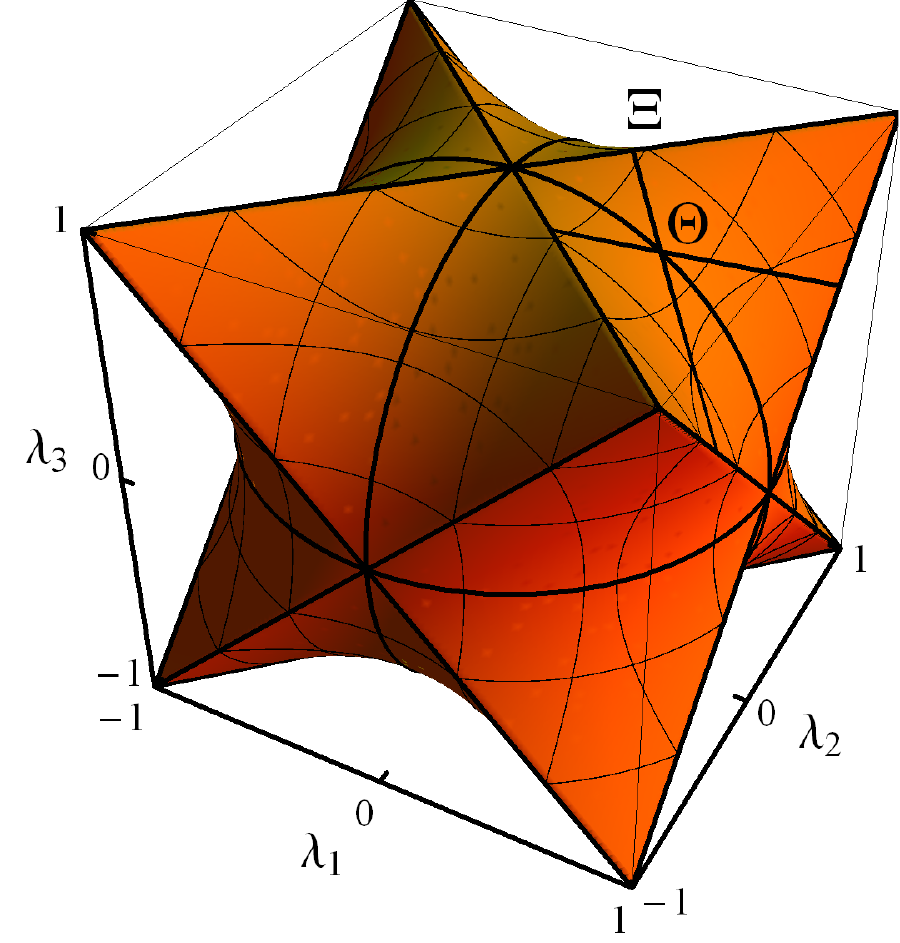}
		\caption{\label{figure2} Region of parameters
			$\lambda_1,\lambda_2,\lambda_3$, where the unital qubit map
			$\Upsilon$ (see Eq.~\eqref{Upsilon}) is 2-tensor-stable positive.
			Map $\Theta$ belongs to the gorge of a hyperboloid, line
			$\Theta\Xi$ is a generatrix of the hyperboloid surface.}
	\end{figure}
	
	\begin{proposition}
		\label{proposition-2-tsp} $\Upsilon$ is 2-tensor-stable positive
		if and only if $\Upsilon^2$ is completely positive.
	\end{proposition}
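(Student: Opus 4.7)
My plan handles the two directions separately using one common identity.

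For necessity, I apply $\Upsilon\otimes\Upsilon$ to the maximally entangled state $\ket{\psi_+}\bra{\psi_+}$. Writing $\Upsilon[X]=\sum_j q_j\sigma_j X\sigma_j$ in Pauli form and using $(\sigma_j\otimes\sigma_k)\ket{\psi_+}=(\sigma_j\sigma_k^\top\otimes I)\ket{\psi_+}$ together with $\sigma_k^\top=\pm\sigma_k$ and the Pauli product $\sigma_j\sigma_k\propto\sigma_{j\oplus k}$ (the Klein four-group law), each summand $(\sigma_j\otimes\sigma_k)\ket{\psi_+}\bra{\psi_+}(\sigma_j\otimes\sigma_k)$ reduces to the Bell projector $P_{j\oplus k}$. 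Regrouping by $m=j\oplus k$ produces the convolution $q_m^{(2)}=\sum_{j\oplus k=m}q_jq_k$, which are exactly the Pauli eigenvalues of $\Upsilon^2$, so $(\Upsilon\otimes\Upsilon)[\ket{\psi_+}\bra{\psi_+}]=\sum_m q_m^{(2)}P_m=\Omega_{\Upsilon^2}$. Positivity of $\Upsilon\otimes\Upsilon$ then forces $\Omega_{\Upsilon^2}\ge 0$, i.e., $\Upsilon^2$ is completely positive.

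For sufficiency, every pure $\ket\psi\in\mathcal H_2\otimes\mathcal H_2$ can be written as $(M\otimes I)\ket{\psi_+}$ for some $M\in\mathcal B(\mathcal H_2)$. The same manipulation yields $(\Upsilon\otimes\Upsilon)[\ket\psi\bra\psi]=\Omega_{\Psi_M}$, where $\Psi_M[X]=\Upsilon\bigl[M\,\Upsilon[X]\,M^\dagger\bigr]$. Thus 2-tensor-stable positivity is equivalent to complete positivity of the whole family $\{\Psi_M\}$, and since $\Psi_I=\Upsilon^2$ the task reduces to proving that $\Upsilon^2$ CP implies $\Psi_M$ CP for every $M$.

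I plan to carry out this last step by computing the action on the Schmidt-canonical state $\sqrt p\ket{00}+\sqrt{1-p}\ket{11}$ directly. Expanding in the two-qubit Pauli basis and using the diagonal action $\sigma_i\otimes\sigma_j\mapsto\lambda_i\lambda_j\,\sigma_i\otimes\sigma_j$, the $4\times 4$ output decouples into two $2\times 2$ blocks on $\{\ket{00},\ket{11}\}$ and $\{\ket{01},\ket{10}\}$. Requiring non-negativity of both block determinants for all $p\in[0,1]$ --- the worst case being $p=1/2$ --- reduces exactly to the three symmetric inequalities $\lambda_i^2+\lambda_j^2\le 1+\lambda_k^2$ with $(i,j,k)$ a permutation of $(1,2,3)$, which after the inverse Hadamard transform is precisely $q_m^{(2)}\ge 0$, i.e., complete positivity of $\Upsilon^2$.

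The main obstacle is showing that Schmidt bases not aligned with any Pauli eigendirection do not impose strictly stronger conditions. I would address this either by the permutation symmetry of the three inequalities in $\lambda_1,\lambda_2,\lambda_3$ (rotating through the three Pauli eigendirections merely shuffles the indices, leaving the constraint set invariant), or more robustly by deriving a direct identity relating $\Omega_{\Psi_M}$ to an ``input-side'' filtering of $\Omega_{\Upsilon^2}$ through $(A\otimes I)\ket{\psi_+}=(I\otimes A^\top)\ket{\psi_+}$ and $\sigma_k^\top=\pm\sigma_k$, so that $\Omega_{\Upsilon^2}\ge 0$ automatically propagates to $\Omega_{\Psi_M}\ge 0$ for every $M\in\mathcal B(\mathcal H_2)$.
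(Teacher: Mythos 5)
Your necessity argument is correct and is essentially the paper's: $(\Upsilon\otimes\Upsilon)[\ket{\psi_+}\bra{\psi_+}]=\Omega_{\Upsilon^2}$, so positivity of $\Upsilon^{\otimes 2}$ forces $\Omega_{\Upsilon^2}\geqslant 0$. Your reformulation of sufficiency is also correct: writing $\ket{\psi}=(M\otimes I)\ket{\psi_+}$ gives $(\Upsilon\otimes\Upsilon)[\ket{\psi}\bra{\psi}]=\Omega_{\Upsilon\circ\mathcal{M}\circ\Upsilon}$ with $\mathcal{M}[X]=MXM^{\dagger}$, so what must be shown is that $\Upsilon^2$ CP implies $\Upsilon\circ\mathcal{M}\circ\Upsilon$ CP for \emph{every} $M$. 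That is exactly where the proposal has a genuine gap, and neither of your two suggested fixes closes it.

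The computation on $\sqrt{p}\ket{00}+\sqrt{1-p}\ket{11}$ only treats Schmidt bases aligned with the $\sigma_3$ eigenbasis (diagonal $M$). The permutation symmetry of the three inequalities extends this only to the discrete group of local unitaries that permute the Pauli axes and flip signs; a generic Schmidt basis corresponds to a generic element of $SU(2)\times SU(2)$, and nothing in the symmetry argument excludes that some oblique basis imposes a strictly stronger constraint. The second fix fails for a structural reason: the identity $(A\otimes I)\ket{\psi_+}=(I\otimes A^{\top})\ket{\psi_+}$ moves $M$ from one tensor factor to the other, but it cannot move $M$ out from \emph{between} the two copies of $\Upsilon$. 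There is no identity of the form $\Omega_{\Upsilon\circ\mathcal{M}\circ\Upsilon}=(X\otimes Y)\,\Omega_{\Upsilon^2}\,(X\otimes Y)^{\dagger}$, and for general maps $\Lambda$ complete positivity of $\Lambda^2$ does \emph{not} imply complete positivity of $\Lambda\circ\mathcal{M}\circ\Lambda$ --- this intermediate-filtering phenomenon is precisely what the paper's remark on $\Upsilon\circ\mathcal{U}\circ\Upsilon$ alludes to. That the implication nevertheless holds for qubit Pauli maps is the nontrivial content of the proposition and needs a genuine argument, which the paper supplies: it writes $\bra{\varphi}(\Upsilon\otimes\Upsilon)[\ket{\psi}\bra{\psi}]\ket{\varphi}={\bf q}^{\top}A{\bf q}=\frac{1}{4}\boldsymbol{\lambda}^{\top}H^{\dag}AH\boldsymbol{\lambda}$ with $A$ symmetric and entrywise non-negative, invokes the Perron--Frobenius theorem to get $|\lambda_-|\leqslant\lambda_+$ for the eigenvalues of $A$, and concludes that every quadric $\boldsymbol{\lambda}^{\top}H^{\dag}AH\boldsymbol{\lambda}=0$ is ``flatter'' than the hyperboloids of Eq.~\eqref{hyperboloids}, so no such quadric can cut into the region $\{\Upsilon^2\ \text{is CP}\}$ without also cutting the trivially positive tetrahedra. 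A uniform bound of this kind over all $\ket{\varphi},\ket{\psi}$ (equivalently over all $M$), not a symmetry or transpose argument, is the missing ingredient in your proposal.
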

	\begin{proof}
		\textit{Necessity}. Let us prove that if $\Upsilon\otimes\Upsilon$
		is positive, then $\Upsilon^2$ is completely positive. Suppose
		$(\Upsilon\otimes\Upsilon)[R] \geqslant 0$ for all $R \geqslant
		0$. Let $R$ be equal to $\ket{\psi_+}\bra{\psi_+}$, where
		$\ket{\psi_+}=\frac{1}{\sqrt{2}}(\ket{00}+\ket{11})$. It is not
		hard to see that
		\begin{eqnarray}
		\label{choi-of-upsilon-squared}
		&& (\Upsilon\otimes\Upsilon) [\ket{\psi_+} \bra{\psi_+}] \nonumber\\
		&& = \frac{1}{4}
		\left(%
		\begin{array}{cccc}
		1+\lambda_3^2 & 0 & 0 & \lambda_1^2 + \lambda_2^2 \\
		0 & 1-\lambda_3^2 & \lambda_1^2-\lambda_2^2 & 0 \\
		0 & \lambda_1^2-\lambda_2^2 & 1-\lambda_3^2 & 0 \\
		\lambda_1^2 + \lambda_2^2 & 0 & 0 & 1+\lambda_3^2 \\
		\end{array}%
		\right) = \Omega_{\Upsilon^2}, \qquad
		\end{eqnarray}
		
		\noindent i.e. $\Omega_{\Upsilon^2} \geqslant 0$ and the map
		$\Upsilon^2$ is completely positive.
		
		Taking into account the explicit form of the Choi matrix
		\eqref{choi-of-upsilon-squared}, we get
		\begin{equation}
		\label{hyperboloids} \{\Upsilon^{2} \ \text{is CP}\}
		\Leftrightarrow \left\{
		\begin{array}{l}
		1 + \lambda_1^2 \geqslant \lambda_2^2 + \lambda_3^2, \\
		1 + \lambda_2^2 \geqslant \lambda_1^2 + \lambda_3^2, \\
		1 + \lambda_3^2 \geqslant \lambda_1^2 + \lambda_2^2, \\
		\end{array}
		\right.
		\end{equation}
		
		\noindent or, concisely, $ 1 \pm \lambda_{3}^2 \geqslant
		|\lambda_{1}^2\pm \lambda_{2}^2|$. Each of inequalities
		\eqref{hyperboloids} defines an interior of the one-sheet
		hyperboloid in the space of parameters
		$(\lambda_1,\lambda_2,\lambda_3)$. Intersection of these three
		hyperboloids is depicted in Fig.~\ref{figure2}. Gorges (throats)
		of those hyperboloids are exactly three mutually perpendicular
		great circles (orthodromes) of a unit sphere.
		
		\textit{Sufficiency}. Suppose $\Upsilon^2$ is completely positive
		and $\lambda_0 = 1$, then parameters
		$(\lambda_1,\lambda_2,\lambda_3)$ satisfy the inequalities
		\eqref{hyperboloids}.
		
		Let us use the alternative description of the map $\Upsilon$,
		namely, $\Upsilon [X] = \sum_{i=0}^3 q_i \sigma_i X \sigma_i$.
		Then $(\Upsilon \otimes \Upsilon) [X] = \sum_{i,j=0}^3 q_i q_j
		\sigma_i \otimes \sigma_j X \sigma_i \otimes \sigma_j$. To
		demonstrate positivity of the map $\Upsilon \otimes \Upsilon$, it
		suffices to show that $\bra{\varphi} (\Upsilon \otimes \Upsilon)
		[\ket{\psi}\bra{\psi}] \ket{\varphi} \geqslant 0$ for all
		$\ket{\psi},\ket{\varphi} \in {\cal H}_2 \otimes {\cal H}_2$. We
		have
		\begin{eqnarray}
		\bra{\varphi} (\Upsilon \otimes \Upsilon) [\ket{\psi}\bra{\psi}]
		\ket{\varphi} &=& \sum_{i,j=0}^3 q_i q_j \left| \bra{\varphi}
		\sigma_i \otimes \sigma_j \ket{\psi} \right|^2 \nonumber\\
		&=& {\bf q}^{\top} A {\bf q} = \frac{1}{4}
		\boldsymbol{\lambda}^{\top} H^{\dag} A H \boldsymbol{\lambda},
		\end{eqnarray}
		
		\noindent where the matrix elements
		\begin{equation}
		A_{ij} = \frac{1}{2}\left( \left| \bra{\varphi} \sigma_i \otimes
		\sigma_j \ket{\psi} \right|^2 + \left| \bra{\varphi} \sigma_j
		\otimes \sigma_i \ket{\psi} \right|^2 \right) \geqslant 0.
		\end{equation}
		
		Thus, the symmetric matrix $A$ has non-negative entries only and,
		according to the Perron-Frobenius theorem, the absolute value of
		its minimal eigenvalue, $|\lambda_{-}|$, cannot exceed its maximal
		eigenvalue, $\lambda_{+} > 0$ (see, e.g.,~\cite{gantmacher-2000}).
		Since the Hadamard matrix $H$ is unitary, the eigenvalues of
		matrices $H^{\dag} A H$ and $A$ coincide. It means that the
		absolute value $|\lambda_{-}|$ of any negative coefficient
		$\lambda_{-}$ in the diagonal representation of the quadratic form
		$\boldsymbol{\lambda}^{\top} H^{\dag} A H \boldsymbol{\lambda}$ is
		less or equal than the maximal positive coefficient. Consequently,
		the principal curvatures $k_1$ and $k_2$ of a quadric surface
		$\boldsymbol{\lambda}^{\top} H^{\dag} A H \boldsymbol{\lambda} =
		0$ satisfy $-1 \leqslant k_1,k_2 \leqslant 1$. On the other hand,
		each of equalities \eqref{hyperboloids} defines a surface with
		boundary principal curvatures ($\min k_1 = -1$, $\max k_2 = 1$).
		Thus, no quadric $\boldsymbol{\lambda}^{\top} H^{\dag} A H
		\boldsymbol{\lambda}=0$ can intersect the interior region of all
		inequalities \eqref{hyperboloids} without intersecting the regions
		of completely positive maps or completely co-positive maps (two
		tetrahedrons in Fig.~\ref{figure-geometry-simple}). Roughly
		speaking, all quadric surfaces $\boldsymbol{\lambda}^{\top}
		H^{\dag} A H \boldsymbol{\lambda}=0$ are more ``flat'' than those
		of Eqs.~\eqref{hyperboloids}. Therefore, the interior region of
		all inequalities \eqref{hyperboloids} is the interior set of all
		figures $\boldsymbol{\lambda}^{\top} H^{\dag} A H
		\boldsymbol{\lambda} \geqslant 0$, which implies $\{\Upsilon^2$ is
		CP$\} \Rightarrow \{ \Upsilon \otimes \Upsilon$ is positive$\}$.
	\end{proof}
	
	Inequalities \eqref{hyperboloids} specify a non-convex geometrical
	figure in the space of parameters
	$(\lambda_1,\lambda_2,\lambda_3)$. However, any interior point of
	that figure corresponds to a convex sum of some boundary map
	$\Upsilon$ and the completely depolarizing map $\mathcal{D}_0$ and
	corresponds to a 2-tensor-stable positive map since the boundary
	map does so. Parameterizing the surface of hyperboloids, one can
	also find numerical evidence of
	Proposition~\ref{proposition-2-tsp}.
	
	Any one-sheet hyperboloid is doubly ruled, i.e. it has two
	distinct generatrices that pass trough every point. Without loss
	of generality, let us consider a particular hyperboloid fragment
	with vertices $(1,1,1)$, $(0,0,1)$, $(1,-1,1)$, and $(1,0,0)$ in
	the space $(\lambda_1,\lambda_2,\lambda_3)$. These vertices
	correspond to the maps ${\rm Id}$, $\mathcal{Z}$, ${\top}$, and
	$\mathcal{X}$, respectively. The first family of generatrices is
	given by straight lines passing through points $\left( x,x,1
	\right)$ and $\left( 1,-f(x),f(x) \right)$, where $f(x) =
	(1-x)/(1+x)$ and $x\in (0,1)$. The second family of generatrices
	is formed by straight lines passing through points $\left( y,-y,1
	\right)$ and $\left( 1,f(y),f(y) \right)$, $y\in (0,1)$. Any point
	inside the involved hyperboloid fragment is defined by a pair of
	parameters $(x,y)\in[0,1]^2$ and corresponds to the following map:
	\begin{equation}
	\label{Upsilon-x-y} \Upsilon = \frac{x(1-y){\rm Id} + y(1-x){\top}
		+ 2xy \mathcal{X} + (1-x)(1-y) \mathcal{Z}}{1+xy},
	\end{equation}
	
	\noindent whose parameters read
	\begin{equation}
	\label{lambda-x-y} \lambda_{1} = \frac{x+y}{1 + x y}, \quad
	\lambda_{2} = \frac{x - y}{1 + x y}, \quad \lambda_{3} = \frac{1 -
		x y}{1 + x y}.
	\end{equation}
	
	Then one can check block-positivity of the Choi matrix of the map
	$\Upsilon\otimes\Upsilon$, i.e. to validate inequality
	\begin{equation}
	\label{choi-block-positive} \bra{\varphi^{AB}} \otimes
	\bra{\chi^{A'B'}} (\Omega_{\Upsilon}^{AA'} \otimes
	\Omega_{\Upsilon}^{BB'}) \ket{\varphi^{AB}} \otimes
	\ket{\chi^{A'B'}} \geqslant 0
	\end{equation}
	
	\noindent numerically for all two qubit states
	$\ket{\varphi^{AB}}$ and $\ket{\chi^{AB}}$ and all $0 \leqslant
	x,y \leqslant 1$.
	
	Alternatively, positivity of the operator $\varrho =
	(\Upsilon\otimes\Upsilon) [\ket{\psi}\bra{\psi}]$ is guaranteed by
	the requirement that all coefficients $s_i$, $i=1,\ldots,4$ of the
	characteristic polynomial
	\begin{equation}
	\label{characteristic-polynomial} {\rm det}(\lambda I-\varrho) =
	\lambda^N - s_1 \lambda^{N-1} + s_2 \lambda^{N-2} - \ldots +
	(-1)^N s_N
	\end{equation}
	
	\noindent are non-negative, i.e. $\rho\geqslant 0 \Leftrightarrow
	s_k \geqslant 0$, $k=1,\ldots,4$~\cite{bengtsson-zyczkowski}. The
	coefficients $s_1={\rm tr}[\varrho]$, $s_2=\frac{1}{2}(s_1{\rm
		tr}[\varrho]-{\rm tr}[\varrho^2])$, and in general, iteratively,
	$s_k=\frac{1}{k}(s_{k-1}{\rm tr}[\rho]-s_{k-2}{\rm tr}[\rho^2]+
	\ldots +(-1)^{k-1}{\rm tr}[\rho^{k}])$. Then one can numerically
	check positivity of matrices
	$(\Upsilon\otimes\Upsilon)[\ket{\psi}\bra{\psi}]$ for all
	two-qubit states $\ket{\psi}$ and parameters $0 \leqslant x,y
	\leqslant 1$.
	
	
	\section{\label{section-decomposable} Decomposability}
	
	Following the results of Refs.~\cite{stormer-1963,stormer-1982},
	we will refer to the map of the form $\Phi_1 + \Phi_2$, where
	$\Phi_1$ is completely positive and $\Phi_2$ is completely
	co-positive, as decomposable.
	
	All positive qubit maps $\Phi: {\cal B}({\cal H}_2) \mapsto {\cal
		B}({\cal H}_2)$ are known to be decomposable~\cite{stormer-1963}.
	Decomposability of extremal positive unital maps on $M_2$ is
	analyzed in Ref.~\cite{majewski-2005}. However, even if $\Phi$ is
	decomposable, it does not imply that $\Phi^{\otimes 2}$ is
	decomposable as it contains terms $\Phi_1 \otimes \Phi_2$ and
	$\Phi_2 \otimes \Phi_1$ which are not necessarily positive.
	Moreover, there exist examples of indecomposable maps ${\cal
		B}({\cal H}_2) \mapsto {\cal B}({\cal
		H}_4)$~\cite{woronowicz-1976}. This means that the decomposability
	of positive tensor powers $\Phi \otimes \Phi$ (as well as the more
	general property of $k$-decomposability~\cite{labuschagne-2006})
	is still an open problem even for qubit maps $\Phi$. In what
	follows, we make some steps toward understanding of this problem
	and consider examples of non-trivial 2-tensor-stable unital maps
	$\Upsilon$ such that $\Upsilon\otimes\Upsilon$ is decomposable.
	
	\begin{example}
		Positive map $\Upsilon \otimes \Upsilon$ with parameters
		$\lambda_1=\frac{1}{\sqrt{2}}$, $\lambda_2=0$,
		$\lambda_3=\frac{1}{\sqrt{2}}$ is decomposable.
		
		In fact, it is not hard to see that
		\begin{equation}
		\Upsilon \otimes \Upsilon = \frac{1}{2} F \circ \left( {\rm Id}
		\otimes {\rm Id} + \top \otimes \top \right),
		\end{equation}
		
		\noindent where $F$ is a two-qubit map of the form $F[X] =
		\frac{1}{4} \sum_{i,j=0}^3 \lambda_{ij} {\rm tr}[\sigma_i \otimes
		\sigma_j X] \sigma_i \otimes \sigma_j$ with $\lambda_{00} = 1$,
		$\lambda_{01} = \lambda_{03} = \lambda_{10} = \lambda_{30} =
		\frac{1}{\sqrt{2}}$, $\lambda_{02} = \lambda_{11} = \lambda_{13} =
		\lambda_{20} = \lambda_{31} = \lambda_{33} = \frac{1}{2}$,
		$\lambda_{12} = \lambda_{21} = \lambda_{23} = \lambda_{32} =
		\frac{1}{4}$, and $\lambda_{22} = 0$. Eigenvalues of the Choi
		matrix $\Omega_F$ are all non-negative, consequently, $F$ is
		completely positive, $F \circ (\top \otimes \top)$ is completely
		co-positive, and $\Upsilon \otimes \Upsilon$ is decomposable.
	\end{example}
	
	\begin{example}
		Consider the one-parametric family of maps $(\mu \Upsilon_1 +
		(1-\mu) \Upsilon_2)^{\otimes 2}$, where $\Upsilon_1$ is given by
		parameters $\lambda_0 = 1$, $\lambda_1 = \lambda_2 = \lambda_3 =
		\frac{2}{3}$, $\Upsilon_2$ is given by parameters $\lambda_0 =
		\lambda_3 = 1$, $\lambda_1 = - \lambda_2 = \frac{1}{20}$. Let us
		show that $(\mu \Upsilon_1 + (1-\mu) \Upsilon_2)^{\otimes 2}$ is
		positive and decomposable for all $0 \leqslant \mu \leqslant 1$.
		Note that $\mu \Upsilon_1 + (1-\mu) \Upsilon_2$ is non-trivial
		2-tensor-stable positive for $0<\mu<\frac{3}{13}$.
		
		The map $\Upsilon_1^{\otimes 2}$ is completely positive, and the
		map $\Upsilon_2^{\otimes 2}$ is completely co-positive. Let us
		note that
		\begin{equation}
		\Upsilon_1 \otimes \Upsilon_2 = F \circ \left( \frac{3}{4} G_1
		\otimes {\rm Id} + \frac{1}{4} \top \otimes G_2  \right),
		\end{equation}
		
		\noindent where $G_1 = {\cal Z}$ ($\lambda_0 = \lambda_3 = 1$,
		$\lambda_1 = \lambda_2 = 0$), $G_2 = {\cal D}_{1/3}$ (depolarizing
		map with parameter $1/3$), and $F$ is a two-qubit map $F[X] =
		\frac{1}{4} \sum_{i,j=0}^3 \lambda_{ij} {\rm tr}[\sigma_i \otimes
		\sigma_j X] \sigma_i \otimes \sigma_j$ with
		$\lambda_{0n}=\lambda_{3n}=(4+\delta_{n0})/5$,
		$\lambda_{1n}=\lambda_{2n}=(2-\delta_{n0})/5$. Since the eigenvalues of
		the Choi matrix $\Omega_F$ are all non-negative, $F \circ (G_1
		\otimes {\rm Id})$ is completely positive as the concatenation of
		completely positive maps, and $F \circ (\top \otimes G_2)$ is
		completely co-positive as the concatenation of completely positive
		and completely co-positive maps. (Note that $G_2$ is entanglement
		breaking and $\top \otimes G_2$ is completely co-positive.) Thus,
		$\Upsilon_1 \otimes \Upsilon_2$ is decomposable. Analogously,
		$\Upsilon_2 \otimes \Upsilon_1$ is decomposable. Finally, $(\mu
		\Upsilon_1 + (1-\mu) \Upsilon_2)^{\otimes 2}$ is decomposable as
		the convex sum of decomposable maps.
	\end{example}
	
	The examples above stimulate us to make a conjecture that all
	positive unital two-qubit maps of the form $\Upsilon \otimes
	\Upsilon$ are decomposable.
	
	\begin{figure}
		\includegraphics[width=8cm]{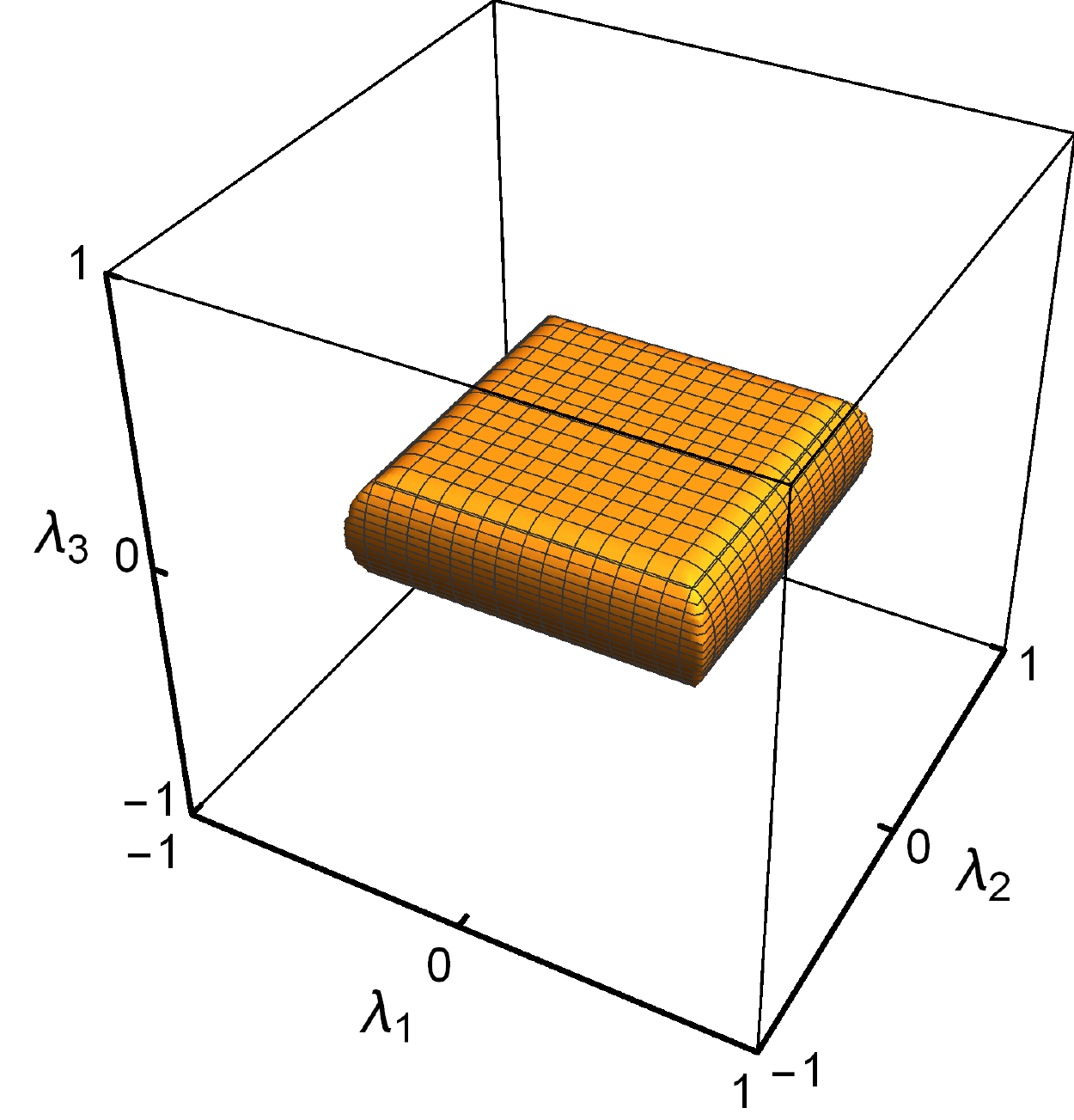}
		\caption{\label{figure-non-unital-positive} Non-unital map
			\eqref{matrix-form-non-unital} is positive for $t=0.8$ in the
			shaded region of parameters $\lambda_1$, $\lambda_2$,
			$\lambda_3$.}
	\end{figure}

	
	\section{\label{section-2-tensor-stable-positive-non-unital} Non-unital qubit maps}
	Similarly to Eq.~\eqref{Phi-through-Upsilon}, an interior map of
	the cone of positive non-unital qubit maps $\Phi: ({\cal B}({\cal
		H}_2))^{+} \mapsto ({\cal B}({\cal H}_2))^{+}$ can be represented
	in the form of the following
	concatenation~\cite{gurvits-2004,aubrun-szarek-2015}:
	\begin{equation}
	\label{Phi-through-Upsilon-non-unital} \Phi[X] = B(\Upsilon[A X
	A^{\dag}])B^{\dag},
	\end{equation}
	
	\noindent where $\Upsilon$ is given by Eq.~\eqref{Upsilon} and
	$A,B \in {\cal B}({\cal H}_2)$ are positive-definite operators. As
	$A$ and $B$ are non-degenerate, the condition $\bra{\varphi}
	(\Phi_1\otimes\Phi_2)[\ket{\psi}\bra{\psi}] \ket{\varphi}
	\geqslant 0$ holds for all $\ket{\psi},\ket{\varphi}\in{\cal H}_4$
	if and only if $\bra{\widetilde{\varphi}}
	(\Upsilon_1\otimes\Upsilon_2)[\ket{\widetilde{\psi}}\bra{\widetilde{\psi}}]
	\ket{\widetilde{\varphi}} \geqslant 0$ holds for all
	$\ket{\widetilde{\psi}},\ket{\widetilde{\varphi}}\in{\cal H}_4$,
	since $\ket{\widetilde{\psi}} = A_1 \otimes A_2 \ket{\psi}$ and
	$\ket{\widetilde{\varphi}} = B_1^{\dag} \otimes B_2^{\dag}
	\ket{\varphi}$. Thus, the positivity of a tensor product of
	non-unital maps $\Phi_1 \otimes \Phi_2$ is equivalent to the
	positivity of the tensor product of corresponding unital maps
	$\Upsilon_1 \otimes \Upsilon_2$.
	
	A qubit map $\Phi$ can be expressed in an appropriate basis by its
	matrix form ${\cal E}_{ij} = \frac{1}{2}{\rm tr}\left[ \sigma_i
	\Phi[\sigma_j] \right]$ as
	follows~\cite{king-ruskai-2001,ruskai-2002}:
	\begin{equation}
	\label{matrix-form-non-unital-general}
	{\cal E} = \left(%
	\begin{array}{cccc}
	1 & 0 & 0 & 0 \\
	t_1 & \lambda_1 & 0 & 0 \\
	t_2 & 0 & \lambda_2 & 0 \\
	t_3 & 0 & 0 & \lambda_3 \\
	\end{array}%
	\right).
	\end{equation}
	
	\begin{figure}
		\includegraphics[width=8cm]{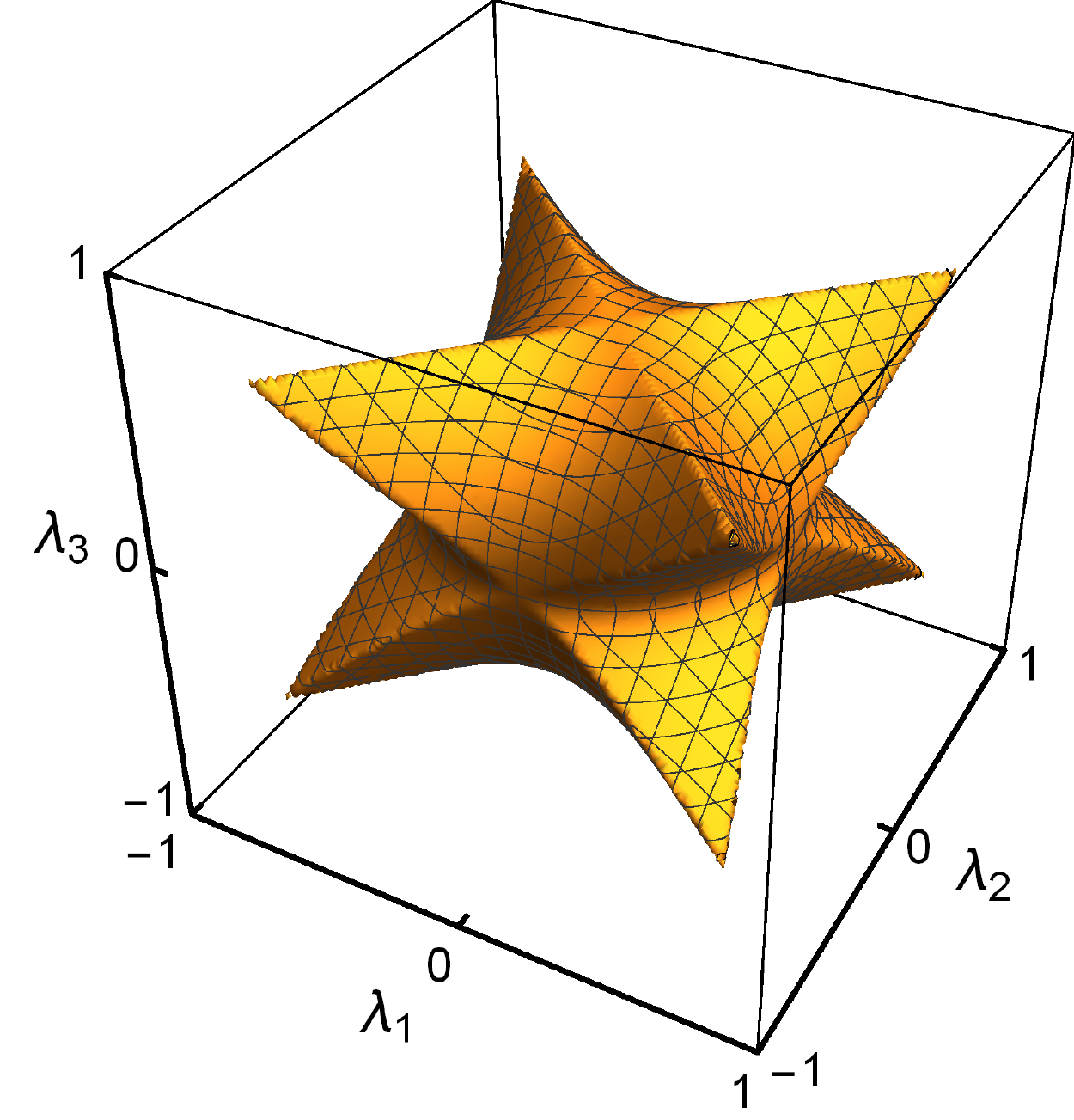}
		\caption{\label{figure-non-unital-entangled} Region of parameters
			$\lambda_1$, $\lambda_2$, $\lambda_3$, satisfying the condition
			$(\Phi\otimes\Phi)[\ket{\psi_+}\bra{\psi_+}] \geqslant 0$ for
			non-unital maps $\Phi$ defined by
			Eq.~\eqref{matrix-form-non-unital} with $t=0.8$.}
	\end{figure}
	
	To demonstrate the idea of reducing the problem to the case of
	unital maps, let us consider a four-parametric family of maps
	$\Phi$ whose matrix representation reads
	\begin{equation}
	\label{matrix-form-non-unital}
	{\cal E} = \left(%
	\begin{array}{cccc}
	1 & 0 & 0 & 0 \\
	0 & \lambda_1 & 0 & 0 \\
	0 & 0 & \lambda_2 & 0 \\
	t & 0 & 0 & \lambda_3 \\
	\end{array}%
	\right).
	\end{equation}
	
	\noindent Such a family comprises the description of extremal
	completely positive qubit maps~\cite{ruskai-2002}.
	
	\begin{proposition}
		\label{proposition-reduction-to-unitary} Let $\Phi$ be a map
		defined by the matrix
		representation~\eqref{matrix-form-non-unital} with
		$1-|t|-|\lambda_3| > 0$ and
		\begin{eqnarray}
		&& \!\!\!\!\! A^{-1} = \left(%
		\begin{array}{cc}
		a_{+} b_{-} & 0 \\
		0 & a_{-} b_{+} \\
		\end{array}%
		\right), \quad B^{-1} = \frac{1}{2} \left(%
		\begin{array}{cc}
		b_{-} & 0 \\
		0 & b_{+} \\
		\end{array}%
		\right), \quad \\
		&& \!\!\!\!\! a_{\pm} = \sqrt{1 \pm t - \lambda_3}, \quad b_{\pm}
		= \pm \sqrt[4]{(1 \pm t)^2 - \lambda_3^2},
		\end{eqnarray}
		
		\noindent then the map $\widetilde{\Upsilon}[Y] = B^{-1}
		\Phi[A^{-1} Y (A^{-1})^{\dag}] (B^{-1})^{\dag}$ is proportional to
		a unital map and the corresponding coefficients in
		Eq.~\eqref{Upsilon} equal
		\begin{eqnarray}
		&& \!\!\!\!\!\!\!\!\!\!\!\! \widetilde{\lambda}_0 \!=\!
		\frac{1}{2} \big( (1 -
		\lambda_3)^2 - t^2 \big) \nonumber\\
		&& \!\!\!\!\!\!\!\!\!\!\!\! \times \left[ (1 \!+\! \lambda_3)^2
		\!-\! t^2 + \sqrt{\big( (1 \!-\! t)^2 \!-\! \lambda_3^2 \big)
			\big( (1
			\!+\!t)^2 \!-\! \lambda_3^2 \big)} \right], \\
		&& \!\!\!\!\!\!\!\!\!\!\!\! \widetilde{\lambda}_1 \!=\! \lambda_1 \sqrt{ \big( (1 \! - \! \lambda_3)^2 \!-\! t^2 \big) \big((1 \!-\! t)^2 \!-\! \lambda_3^2 \big) \big( (1 \!+\! t)^2 \!-\! \lambda_3^2 \big)}, \\
		&& \!\!\!\!\!\!\!\!\!\!\!\! \widetilde{\lambda}_2 \!=\! \lambda_2 \sqrt{ \big( (1 \! - \! \lambda_3)^2 \!-\! t^2 \big) \big((1 \!-\! t)^2 \!-\! \lambda_3^2 \big) \big( (1 \!+\! t)^2 \!-\! \lambda_3^2 \big)}, \\
		&& \!\!\!\!\!\!\!\!\!\!\!\! \widetilde{\lambda}_3 \!=\!
		\frac{1}{2} \big( (1
		- \lambda_3)^2 - t^2 \big) \nonumber\\
		&& \!\!\!\!\!\!\!\!\!\!\!\! \times \left[ (1 \!+\! \lambda_3)^2
		\!-\! t^2 - \sqrt{\big( (1 \!-\! t)^2 \!-\! \lambda_3^2 \big)
			\big( (1 \!+\!t)^2 \!-\! \lambda_3^2 \big)} \right].
		\end{eqnarray}
	\end{proposition}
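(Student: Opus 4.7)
The proof is a direct matrix calculation, made tractable by the fact that $A^{-1}$ and $B^{-1}$ are both diagonal in the computational basis in which the Pauli operators take their standard form. Consequently, conjugation by $A^{-1}$ or $B^{-1}$ preserves the splitting ${\cal B}({\cal H}_2) = {\rm span}\{I,\sigma_3\} \oplus {\rm span}\{\sigma_1,\sigma_2\}$. Combined with the explicit action of $\Phi$ read off from Eq.~\eqref{matrix-form-non-unital}, namely $\Phi[I] = I + t\sigma_3$, $\Phi[\sigma_3] = \lambda_3\sigma_3$, and $\Phi[\sigma_k] = \lambda_k\sigma_k$ for $k=1,2$, this block structure is inherited by $\widetilde{\Upsilon}$. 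My plan is therefore to compute $\widetilde{\Upsilon}[\sigma_j]$ for $j=0,1,2,3$ separately, read off the Pauli coefficients, and match them to the claimed values of $\widetilde{\lambda}_j$.

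For $k\in\{1,2\}$ the computation is immediate: the diagonal-times-$\sigma_k$-times-diagonal sandwich gives $A^{-1}\sigma_k(A^{-1})^{\dag} = a_{+}a_{-}b_{+}b_{-}\,\sigma_k$ and $B^{-1}\sigma_k(B^{-1})^{\dag} = \tfrac{1}{4}b_{+}b_{-}\,\sigma_k$, so combining with $\Phi[\sigma_k] = \lambda_k\sigma_k$ yields a coefficient proportional to $\lambda_k\,a_{+}a_{-}\,b_{+}^{2}b_{-}^{2}$. Substituting $a_{+}a_{-} = \sqrt{(1-\lambda_3)^2 - t^2}$ and $b_{+}^{2}b_{-}^{2} = \sqrt{((1-t)^2 - \lambda_3^2)((1+t)^2 - \lambda_3^2)}$ would then reproduce the stated formulas for $\widetilde{\lambda}_1$ and $\widetilde{\lambda}_2$, up to the overall normalization absorbed by the word ``proportional''. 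The hypothesis $1-|t|-|\lambda_3|>0$ enters here precisely to ensure that all four radicands are non-negative, so that $a_\pm$ and $b_\pm$ are well-defined real numbers and $A^{-1},B^{-1}$ are invertible.

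The $\{I,\sigma_3\}$ block is slightly more intricate because $\Phi$ mixes $I$ and $\sigma_3$ through the parameter $t$. I would expand $A^{-1}I(A^{-1})^{\dag}$ and $A^{-1}\sigma_3(A^{-1})^{\dag}$ in the basis $\{I,\sigma_3\}$, apply $\Phi$ using its explicit action, sandwich by $B^{-1}$, and re-decompose the result in the Pauli basis. Proportionality of $\widetilde{\Upsilon}$ to a unital map amounts to vanishing of the $\sigma_3$-component of $\widetilde{\Upsilon}[I]$; after substitution this reduces to an equality between the two diagonal entries of the output matrix, and the specific choice $b_{\pm}^{2} = \sqrt{(1\pm t)^2 - \lambda_3^2}$ is engineered so that this equality holds identically in $t$ and $\lambda_3$. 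The $I$-component of $\widetilde{\Upsilon}[I]$ then delivers $\widetilde{\lambda}_0$, while the $\sigma_3$-component of $\widetilde{\Upsilon}[\sigma_3]$ delivers $\widetilde{\lambda}_3$.

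The main obstacle is not conceptual but algebraic: simplifying the raw expressions for $\widetilde{\lambda}_0$ and $\widetilde{\lambda}_3$ into the symmetric closed form stated in the proposition. The key identity to deploy is $((1-t)^2 - \lambda_3^2)((1+t)^2 - \lambda_3^2) = ((1-\lambda_3)^2 - t^2)((1+\lambda_3)^2 - t^2)$, which allows the radical $b_{+}^{2}b_{-}^{2}$ to be rewritten in whichever ($t$-symmetric or $\lambda_3$-symmetric) form is convenient at each step. Using it one can factor out the common term $(1-\lambda_3)^2-t^2$ cleanly, leaving the bracket $(1+\lambda_3)^2 - t^2 \pm \sqrt{((1-t)^2-\lambda_3^2)((1+t)^2-\lambda_3^2)}$ for $\widetilde{\lambda}_0$ and $\widetilde{\lambda}_3$ respectively, as claimed.
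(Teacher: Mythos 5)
Your proposal is correct and is essentially the paper's own argument: the paper simply notes that concatenation of maps corresponds to multiplying their matrix representations in the Pauli basis and calls the rest "a straightforward calculation," which is exactly the basis-element computation you lay out (indeed in more detail, including the key factorization $\big((1-t)^2-\lambda_3^2\big)\big((1+t)^2-\lambda_3^2\big)=\big((1-\lambda_3)^2-t^2\big)\big((1+\lambda_3)^2-t^2\big)$ and the role of the hypothesis $1-|t|-|\lambda_3|>0$). The only point worth adding is that "proportional to a map of the form \eqref{Upsilon}" requires checking not only that the $\sigma_3$-component of $\widetilde{\Upsilon}[I]$ vanishes but also that the $I$-component of $\widetilde{\Upsilon}[\sigma_3]$ vanishes; both follow from the same choice of $b_{\pm}$, and your computation scheme produces them together.
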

	
	\begin{proof}
		Since the concatenation of maps corresponds to the product of
		their matrix representations, a straightforward calculation yields
		the map $\widetilde{\Upsilon}$ which has the form \eqref{Upsilon}.
	\end{proof}
	
	Thus, the map $\Phi$ given by Eq.~\eqref{matrix-form-non-unital}
	is positive when (i) $1-|t|-|\lambda_3| > 0$ and
	$|\widetilde{\lambda}_k| \leqslant \widetilde{\lambda}_0$,
	$k=1,2,3$ or (ii) $1-|t|-|\lambda_3| = 0$ and
	$\lambda_1^2,\lambda_2^2 \leqslant 1-|t|$. The latter one can
	readily be obtained by considering Bloch ball transformations.
	Fixing parameter $t$, one can visualize conditions (i)--(ii) in
	the reference frame $(\lambda_1,\lambda_2,\lambda_3)$ (see Fig.
	\ref{figure-non-unital-positive}).
	
	\begin{figure}
		\includegraphics[width=8cm]{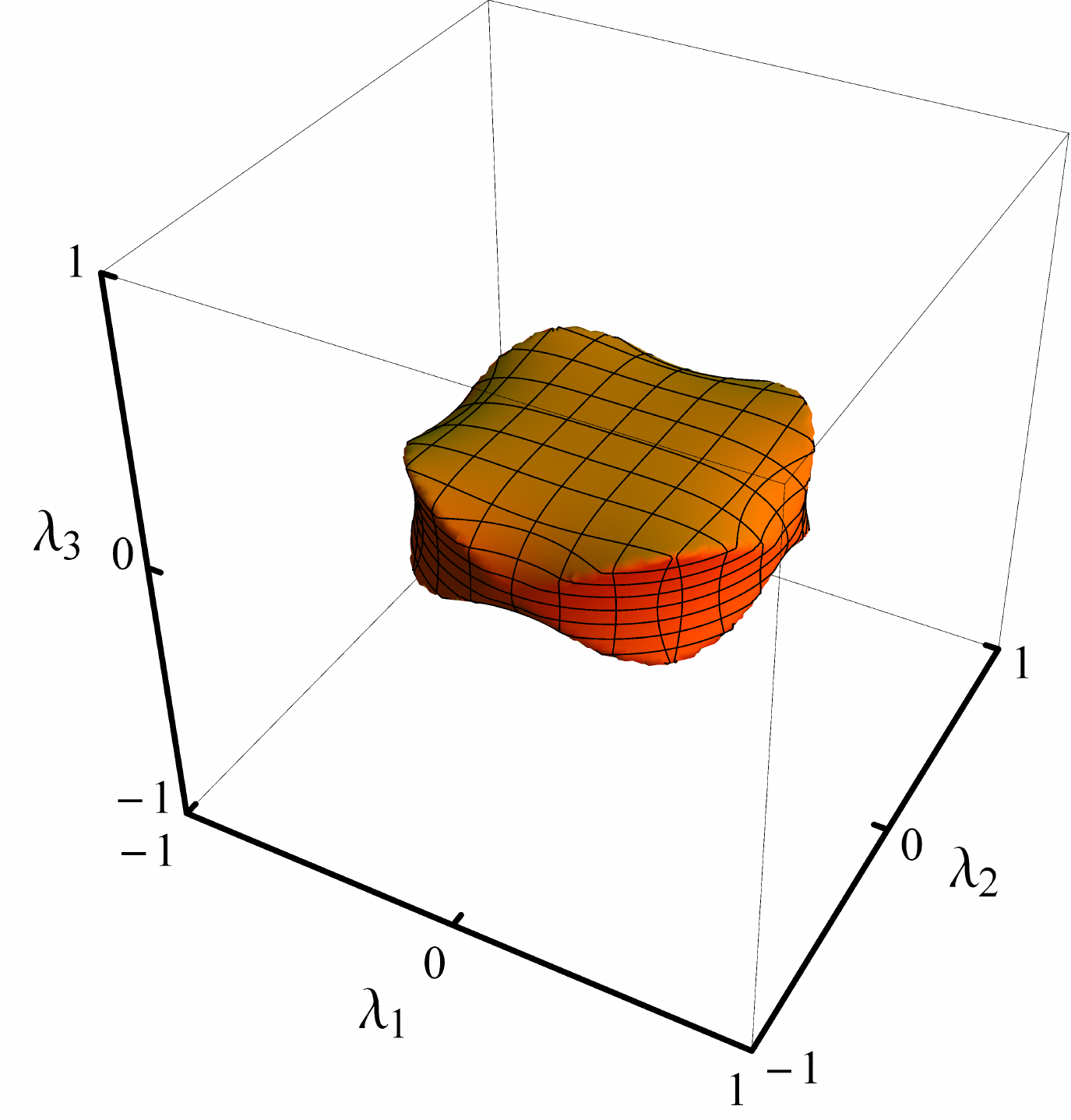}
		\caption{\label{figure-non-unital-numerical} Region of parameters
			$\lambda_1$, $\lambda_2$, $\lambda_3$, where the non-unital map
			$\Phi$ defined by Eq.~\eqref{matrix-form-non-unital} with $t=0.8$
			is 2-tensor-stable positive.}
	\end{figure}
	
	\begin{remark}
		The result of Ref.~\cite{aubrun-szarek-2015} is that the matrices
		$A^{-1}$ and $B^{-1}$ can be chosen positive-definite for a map
		$\Phi$ from the interior of the cone of positive maps. In
		Proposition~\ref{proposition-reduction-to-unitary}, we have
		considered non-degenerate matrices $A^{-1}$ and $B^{-1}$.
	\end{remark}
	
	It is not hard to see that in contrast to the case of unital maps,
	the condition $(\Phi\otimes\Phi)[\ket{\psi_{+}}\bra{\psi_{+}}]
	\geqslant 0$, where
	$\ket{\psi_{+}}=\frac{1}{\sqrt{2}}(\ket{00}+\ket{11})$, is not a
	sufficient condition for positivity of a non-unital map
	$\Phi\otimes\Phi$. In fact, direct calculation of eigenvalues of
	$(\Phi\otimes\Phi)[\ket{\psi_{+}}\bra{\psi_{+}}]$ results in the
	following conditions:
	\begin{equation}
	\label{conditions-necessary-non-unital-entangled} \left\{
	\begin{array}{l}
	1-t^2+\lambda_1^2 - \lambda_2^2 - 2t\lambda_3 - \lambda_3^2 \geqslant 0, \\
	1-t^2-\lambda_1^2 + \lambda_2^2 - 2t\lambda_3 - \lambda_3^2 \geqslant 0, \\
	1-t^2-\sqrt{4t^2 + (\lambda_1^2+\lambda_2^2)^2} + \lambda_3^2 \geqslant 0, \\
	1-t^2+\sqrt{4t^2 + (\lambda_1^2+\lambda_2^2)^2} + \lambda_3^2 \geqslant 0. \\
	\end{array}
	\right.
	\end{equation}
	
	The area of parameters $\lambda_1,\lambda_2,\lambda_3$ satisfying
	inequalities~\eqref{conditions-necessary-non-unital-entangled} for
	a fixed value of parameter $t$ is shown in
	Fig.~\ref{figure-non-unital-entangled}.
	
	However, from Eq.~\eqref{Phi-through-Upsilon-non-unital} and
	Proposition~\ref{proposition-2-tsp} it follows that a map $\Phi$
	from the interior of cone of positive maps is 2-tensor-stable
	positive if and only if its action on the pure state $A^{-1}
	\otimes A^{-1} \ket{\psi_+}$ results in the positive-semidefinite
	operator. Consequently, such a map $\Phi$ is 2-tensor-stable
	positive if and only if
	\begin{equation}
	\widetilde{\lambda}_0^2 \pm \widetilde{\lambda}_{3}^2 \geqslant
	|\widetilde{\lambda}_{1}^2 \pm \widetilde{\lambda}_{2}^2|.
	\end{equation}
	
	The region of parameters $\lambda_1$, $\lambda_2$, $\lambda_3$,
	where the map $\Phi\otimes\Phi$ is positive for a fixed $t$, is
	shown in Fig.~\ref{figure-non-unital-numerical}. Although
	Fig.~\ref{figure-non-unital-numerical} looks like an intersection
	of regions depicted in Figs.~\ref{figure-non-unital-positive} and
	\ref{figure-non-unital-entangled}, it is not.
	
	Finally, one can proceed analogously to find necessary and
	sufficient conditions for 2-tensor-stable positivity of maps
	defined by the matrix
	representation~\eqref{matrix-form-non-unital-general}.
	
	
	\section{\label{section-3-tensor-stable-positive-unital} 3-tensor-stable positive qubit maps}
	
	Let us proceed to higher-order tensor-stable positive maps,
	namely, a unital subclass of 3-tensor-stable positive qubit maps
	$\Phi$. Similarly to the results of
	Sec.~\ref{section-2-tensor-stable-positive-unital}, the map
	$\Phi^{\otimes 3}$ is positive if and only if $\Upsilon^{\otimes
		3}$ is positive, with the diagonal map $\Upsilon$ being
	parameterized by Eq.~\eqref{Upsilon}.
	
	First, we analytically find necessary conditions for positivity of
	the map $\Upsilon^{\otimes 3}$.
	
	\begin{proposition}
		\label{proposition-3-tsp} If the unital qubit map $\Upsilon$ is
		3-tensor-stable positive, then the following 12 inequalities are
		satisfied:
		\begin{eqnarray}
		&& \label{3-tsp-1} 1 - \lambda_i^3 - 3 \lambda_i \lambda_j^2 + 3 \lambda_k^2 \geqslant 0, \\
		&& \label{3-tsp-2} 1 + \lambda_i^3 + 3 \lambda_i \lambda_j^2 + 3
		\lambda_k^2 \geqslant 0,
		\end{eqnarray}
		
		\noindent where $(i,j,k)$ is a permutation of indices $(1,2,3)$,
		i.e. $i,j,k=1,2,3$ and $i \neq j \neq k \neq i$.
	\end{proposition}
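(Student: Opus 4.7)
The plan is to generalize the 2-tsp strategy of Proposition~\ref{proposition-2-tsp} by applying $\Upsilon^{\otimes 3}$ to GHZ-type three-qubit probes aligned along each Pauli axis; each probe delivers a pair of necessary inequalities, and together they reproduce the twelve listed in the statement.

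For $k\in\{1,2,3\}$ let $\{\ket{e_k^+},\ket{e_k^-}\}$ be the orthonormal eigenbasis of $\sigma_k$, and consider
\begin{equation}
\ket{\psi_{k,\phi}} = \tfrac{1}{\sqrt{2}}\bigl(\ket{e_k^+}^{\otimes 3} + e^{i\phi}\ket{e_k^-}^{\otimes 3}\bigr),\qquad \phi\in\{0,\pi/2\}.
\end{equation}
By 3-tensor-stable positivity, $\Upsilon^{\otimes 3}[\ket{\psi_{k,\phi}}\bra{\psi_{k,\phi}}]\geqslant 0$ for each $k,\phi$.

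Setting $\{i,j\}=\{1,2,3\}\setminus\{k\}$, $p=(1+\lambda_k)/2$, $q=(1-\lambda_k)/2$, and $a_\pm=\lambda_i\pm\lambda_j$, a direct Pauli-basis calculation using \eqref{Upsilon} shows that in the $\sigma_k$-eigenbasis $\Upsilon$ acts as $\ket{e_k^{\pm}}\bra{e_k^{\pm}}\mapsto p\ket{e_k^{\pm}}\bra{e_k^{\pm}}+q\ket{e_k^{\mp}}\bra{e_k^{\mp}}$ and $\ket{e_k^{+}}\bra{e_k^{-}}\mapsto\tfrac{1}{2}(a_+\ket{e_k^{+}}\bra{e_k^{-}}+a_-\ket{e_k^{-}}\bra{e_k^{+}})$. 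Consequently $\Upsilon^{\otimes 3}[\ket{\psi_{k,\phi}}\bra{\psi_{k,\phi}}]$ is block-diagonal in the tensor-power $\sigma_k$-eigenbasis, with four $2\times 2$ blocks indexed by pairs $\{\ket{\mathbf s},\ket{\bar{\mathbf s}}\}$ of bit strings related by the global flip $\sigma_k^{\otimes 3}$.

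The relevant block is the ``GHZ block'' $\{\ket{e_k^+}^{\otimes 3},\ket{e_k^-}^{\otimes 3}\}$. Using $p+q=1$ its two diagonal entries both equal $(p^3+q^3)/2=(1+3\lambda_k^2)/8$, while its off-diagonal entry equals $(e^{-i\phi}a_+^3+e^{i\phi}a_-^3)/16$. Non-negativity of this Hermitian $2\times 2$ matrix reduces to
\begin{equation}
1+3\lambda_k^2 \;\geqslant\; \tfrac{1}{2}\bigl|e^{-i\phi}a_+^3+e^{i\phi}a_-^3\bigr|.
\end{equation}
Using $a_+^3+a_-^3=2(\lambda_i^3+3\lambda_i\lambda_j^2)$ and $a_+^3-a_-^3=2(\lambda_j^3+3\lambda_j\lambda_i^2)$, the two phase choices $\phi=0$ and $\phi=\pi/2$ yield, respectively, the pairs $1+3\lambda_k^2\pm(\lambda_i^3+3\lambda_i\lambda_j^2)\geqslant 0$ and $1+3\lambda_k^2\pm(\lambda_j^3+3\lambda_j\lambda_i^2)\geqslant 0$. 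Running $k$ through $\{1,2,3\}$ thus sweeps the ordered triple $(i,j,k)$ through all six permutations of $(1,2,3)$ and produces exactly the twelve inequalities \eqref{3-tsp-1}--\eqref{3-tsp-2}.

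The main obstacle is the Pauli-basis bookkeeping establishing the block-diagonal structure and pinning down the two entries of the GHZ block; once these are in hand, non-negativity of a single Hermitian $2\times 2$ matrix finishes the argument.
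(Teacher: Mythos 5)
Your proposal is correct and takes essentially the same route as the paper: the paper probes $\Upsilon^{\otimes 3}$ with the states $U_iU_j\ket{\rm GHZ}$, where the local unitaries $u_i$ realign the GHZ state along the three Bloch axes with varying relative phases — exactly your family $\ket{\psi_{k,\phi}}$. The only difference is that you carry out explicitly the "intersection of constraints" the paper leaves implicit, and your computation of the GHZ block ($\;$diagonal $(1+3\lambda_k^2)/8$, off-diagonal $(e^{-i\phi}a_+^3+e^{i\phi}a_-^3)/16\;$) checks out and yields precisely the twelve inequalities.
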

	\begin{proof}
		Consider the three-qubit Greenberger?-Horne?-Zeilinger
		state~\cite{greenberger-horne-zeilinger-1989,greenberger-1990}
		\begin{equation}
		\ket{{\rm GHZ}} = \frac{1}{\sqrt{2}} ( \ket{000} + \ket{111} )
		\end{equation}
		
		\noindent written in the basis, in which the map $\Upsilon$ has
		the form \eqref{Upsilon}. Let us define permutations generated by
		the following matrices:
		\begin{eqnarray}
		\label{u-1} && u_1 = \frac{\sigma_2 + \sigma_3}{\sqrt{2}} = \frac{1}{\sqrt{2}} \left(%
		\begin{array}{cc}
		1 & -i \\
		i & -1 \\
		\end{array}%
		\right),\\
		\label{u-2} && u_2 = \frac{\sigma_1 + \sigma_3}{\sqrt{2}} = \frac{1}{\sqrt{2}} \left(%
		\begin{array}{cc}
		1 & 1 \\
		1 & -1 \\
		\end{array}%
		\right), \nonumber\\
		\label{u-3} && u_3 = \frac{\sigma_1 + \sigma_2}{\sqrt{2}} = \frac{1}{\sqrt{2}} \left(%
		\begin{array}{cc}
		0 & 1-i \\
		1+i & 0 \\
		\end{array}%
		\right).
		\end{eqnarray}
		
		\noindent The physical meaning of unitary transformation $u_i
		\cdot u_i^{\dag}$ is the rotation of the Bloch ball such that the
		$i$-th direction becomes inverted and the other two perpendicular
		directions ($j$-th and $k$-th) become interchanged. Denote $U_0 =
		I \otimes I \otimes I$ and $U_i = u_i \otimes u_i \otimes u_i$,
		$i=1,2,3$. We generate the following transformations of the GHZ
		state:
		\begin{equation}
		\varrho_{ij} = U_i U_j \ket{{\rm GHZ}} \bra{{\rm GHZ}} (U_i
		U_j)^{\dag}, \quad i,j=0,1,2,3.
		\end{equation}
		
		If $\Upsilon$ is 3-tensor-stable positive, then
		\begin{equation}
		(\Upsilon\otimes\Upsilon\otimes\Upsilon)[\varrho_{ij}] \geqslant
		0, \quad i,j=0,1,2,3,
		\end{equation}
		
		\noindent which is a number of constraints on parameters
		$\lambda_1,\lambda_2,\lambda_3$. Intersection of these constraints
		results in 12 inequalities \eqref{3-tsp-1}--\eqref{3-tsp-2}.
	\end{proof}
	
	\begin{figure}
		\includegraphics[width=9cm]{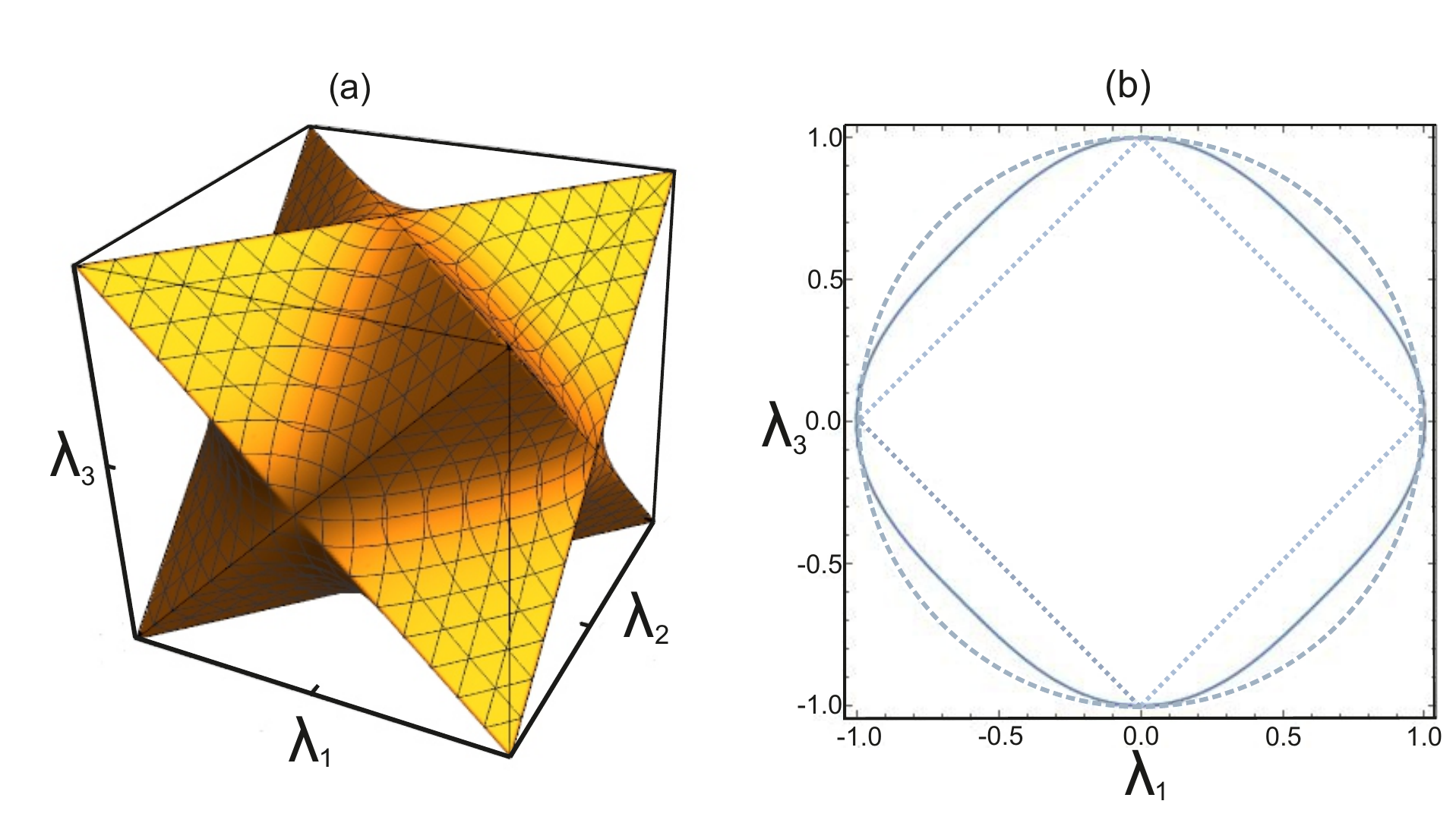}
		\caption{ \label{figure-3-tsp} (a) Region of parameters
			$\lambda_1,\lambda_2,\lambda_3$, where the unital qubit map
			$\Upsilon$ (see Eq.~\eqref{Upsilon}) is 3-tensor-stable positive.
			(b) The cut $\lambda_2=0$: dashed line is a boundary of region,
			where $\Upsilon$ is 2-tensor-stable positive; solid line is a
			boundary of region, where $\Upsilon$ is 3-tensor-stable positive;
			dotted line is a boundary of region, where $\Upsilon$ is
			tensor-stable positive.}
	\end{figure}
	
	In view of complexity of inequalities
	\eqref{3-tsp-1}--\eqref{3-tsp-2}, we have used numerical methods
	to analyze the block-positivity of the Choi operator
	$\Omega_{\Upsilon^{\otimes 3}}$ with respect to the cut $123|123$.
	It turns out that $\Omega_{\Upsilon^{\otimes 3}}$ is block
	positive whenever parameters $\lambda_1,\lambda_2,\lambda_3$
	satisfy \eqref{3-tsp-1}--\eqref{3-tsp-2}. Therefore, there is a
	numerical evidence that Proposition~\ref{proposition-3-tsp}
	provides not only a necessary but also a sufficient condition for
	positivity of the map $\Upsilon^{\otimes 3}$.
	
	The region of parameters $(\lambda_1,\lambda_2,\lambda_3)$
	satisfying \eqref{3-tsp-1}--\eqref{3-tsp-2} is depicted in
	Fig.~\ref{figure-3-tsp}. One can readily see that 3-tensor-stable
	positive maps occupy a subset of 2-tensor-stable positive maps and
	contain the set of tensor-stable positive qubit maps, which is
	known to consist of trivial maps only (completely positive and
	completely co-positive ones)~\cite{muller-hermes-2016}.
	
	Using the relation~\eqref{Phi-through-Upsilon-non-unital} and
	Proposition~\ref{proposition-3-tsp}, the full characterization of
	non-unital 3-tensor-stable positive qubit maps follows
	straightforwardly.
	
	
	\section{\label{section-n-tsp} $n$-tensor-stable positive qubit maps}
	
	A general positive qubit map $\Phi$ is $n$-tensor-stable positive
	if and only if the corresponding Pauli map $\Upsilon$ is
	$n$-tensor stable positive (specified by formula
	\eqref{Phi-through-Upsilon} for unital maps $\Phi$ and by formula
	\eqref{Phi-through-Upsilon-non-unital} for non-unital maps
	$\Phi$).
	
	\begin{proposition}
		Suppose the Pauli qubit map $\Upsilon$ is $n$-tensor-stable
		positive, then
		\begin{eqnarray}
		\label{condition-n-tsp}
		&& | (1 + \lambda_i)^p (1 - \lambda_i)^q + (1 - \lambda_i)^p (1 + \lambda_i)^q | \geqslant \nonumber\\
		&&  \geqslant | (\lambda_j + \lambda_k)^p (\lambda_j - \lambda_k)^q + (\pm 1)^n (\lambda_j - \lambda_k)^p (\lambda_j + \lambda_k)^q |, \nonumber\\
		\end{eqnarray}
		
		\noindent for all permutations $(i,j,k)$ of indices $(1,2,3)$ and
		all $p,q \in {\mathbb Z}_+$ such that $p+q=n$.
	\end{proposition}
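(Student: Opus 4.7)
My plan is to generalize the construction used in the proof of Proposition~\ref{proposition-3-tsp}: for each permutation $(i,j,k)$ of $(1,2,3)$, each partition $n=p+q$ with $p,q\in\mathbb{Z}_+$, and each choice of sign in $(\pm 1)^n$, I will exhibit an $n$-qubit pure state whose image under $\Upsilon^{\otimes n}$ contains a specific $2\times 2$ principal submatrix whose positive semidefiniteness is precisely the claimed inequality. Since $\Upsilon^{\otimes n}$ is assumed positive, every such submatrix must be PSD, which yields the full family of inequalities.

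Concretely, for each $i\in\{1,2,3\}$ and phase $\phi\in\{0,\pi/2\}$ I will take the generalized GHZ state
\begin{equation*}
\ket{\Psi_{\phi}^{(i)}}=\tfrac{1}{\sqrt{2}}\bigl(\ket{0_i}^{\otimes n}+e^{i\phi}\ket{1_i}^{\otimes n}\bigr),
\end{equation*}
where $\{\ket{0_i},\ket{1_i}\}$ denotes the eigenbasis of $\sigma_i$. Using the Pauli form \eqref{Upsilon}, I will first verify that $\Upsilon[\ket{0_i}\bra{0_i}]=(I+\lambda_i\sigma_i)/2$ and $\Upsilon[\ket{1_i}\bra{1_i}]=(I-\lambda_i\sigma_i)/2$ are diagonal in the $\sigma_i$-eigenbasis with eigenvalues $(1\pm\lambda_i)/2$, while $\Upsilon[\ket{0_i}\bra{1_i}]$ is anti-diagonal in that basis with matrix entries $(\lambda_j+\lambda_k)/2$ and $(\lambda_j-\lambda_k)/2$ (up to a sign dictated by the cyclic orientation of $(i,j,k)$). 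These local facts will force $\Upsilon^{\otimes n}[\ket{\Psi_\phi^{(i)}}\bra{\Psi_\phi^{(i)}}]$ to be block-diagonal in the $\sigma_i^{\otimes n}$-eigenbasis, each block being $2\times 2$ and supported on a pair of antipodal bit strings $(b,\bar b)$.

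For a bit string $b$ having exactly $p$ zeros and $q=n-p$ ones, a direct expansion of the four pieces of $\ket{\Psi_\phi^{(i)}}\bra{\Psi_\phi^{(i)}}$ will show that both diagonal entries of the corresponding block coincide and equal $2^{-n-1}\bigl[(1+\lambda_i)^p(1-\lambda_i)^q+(1-\lambda_i)^p(1+\lambda_i)^q\bigr]$, while the off-diagonal entry is $2^{-n-1}\bigl[e^{-i\phi}(\lambda_j+\lambda_k)^p(\lambda_j-\lambda_k)^q+e^{i\phi}(\lambda_j-\lambda_k)^p(\lambda_j+\lambda_k)^q\bigr]$. The PSD condition of this $2\times 2$ block is exactly the claimed inequality: the choice $\phi=0$ reproduces the RHS with the $+1$ option in $(\pm 1)^n$, and $\phi=\pi/2$ reproduces it with $-1$. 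Letting $(i,j,k)$, $(p,q)$ and $\phi$ vary then delivers the entire family.

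The main technical nuisance, rather than a conceptual obstacle, is the careful bookkeeping of sign conventions in $\Upsilon[\ket{0_i}\bra{1_i}]$: the roles of $(\lambda_j+\lambda_k)/2$ and $\pm(\lambda_j-\lambda_k)/2$ as off-diagonal matrix entries swap depending on whether $(i,j,k)$ is cyclic or anti-cyclic. These ambiguities are harmless because only the absolute value of the off-diagonal block entry enters the PSD condition, so any such sign is absorbed by the outer $|\cdot|$ and the $(\pm 1)^n$ alternative in the stated inequality.
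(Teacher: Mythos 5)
Your proposal is correct and follows essentially the same route as the paper: probe $\Upsilon^{\otimes n}$ with GHZ-type states, observe that the output has $X$-form in the relevant product eigenbasis, and read off the inequalities from positive semidefiniteness of the $2\times 2$ antipodal blocks (the paper's Sylvester-criterion step). The only cosmetic difference is that you generate the sign variants via the relative phase $e^{i\phi}$, $\phi\in\{0,\pi/2\}$, rather than via the paper's rotated states $u_i^{\otimes n}\ket{{\rm GHZ}_n}$ and $(u_iu_j)^{\otimes n}\ket{{\rm GHZ}_n}$; this produces $-1$ where the paper's argument produces $(-1)^n$, which still covers (and for even $n$ slightly extends) the stated family of inequalities.
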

	
	\begin{proof}
		
		Consider a generalized GHZ state of $n$ qubits,
		$\ket{\text{GHZ}_n} = \frac{1}{\sqrt{2}}(\ket{0}^{\otimes n} +
		\ket{1}^{\otimes n})$. Note that
		$\ket{\text{GHZ}_n}\bra{\text{GHZ}_n} = \frac{1}{2} [ (\sigma_+
		\sigma_-)^{\otimes n} + \sigma_+^{\otimes n} + \sigma_-^{\otimes
			n} + (\sigma_- \sigma_+ )^{\otimes n}]$, where $\sigma_{\pm} =
		\frac{1}{2}(\sigma_1 \pm i \sigma_2)$. Since
		$\Upsilon[\sigma_{\pm} \sigma_{\mp}] = \sigma_0 \pm \lambda_3
		\sigma_3$ and $\Upsilon[\sigma_{\pm}] = \lambda_1 \sigma_1 \pm i
		\lambda_2 \sigma_2$, the operator
		\begin{eqnarray}
		\!\!\!\!\!\!\!\!\!\! && \Upsilon^{\otimes n} [\ket{\text{GHZ}_n}\bra{\text{GHZ}_n}] = 2^{-(n+1)} \Big[ (\sigma_0 + \lambda_3 \sigma_3)^{\otimes n} \nonumber\\
		\!\!\!\!\!\!\!\!\!\! && + (\lambda_1 \sigma_1 + i \lambda_2 \sigma_2)^{\otimes n} + (\lambda_1 \sigma_1 - i \lambda_2 \sigma_2)^{\otimes n} + (\sigma_0 - \lambda_3 \sigma_3)^{\otimes n} \Big] \nonumber
		\end{eqnarray}
		
		\noindent has so-called $X$-form in the conventional basis. In
		accordance with Sylvester's criterion, such an operator is
		positive semidefinite if and only if $(1 + \lambda_3)^p (1 -
		\lambda_3)^q + (1 - \lambda_3)^p (1 + \lambda_3)^q | \geqslant |
		(\lambda_1 + \lambda_2)^p (\lambda_1 - \lambda_2)^q + (\lambda_1 -
		\lambda_2)^p (\lambda_1 + \lambda_2)^q | $ for all $p,q \in
		{\mathbb Z}_+$, $p+q = n$. Continuing the same line of reasoning
		for states $u_i^{\otimes n}\ket{\text{GHZ}_n}$ and $(u_i
		u_j)^{\otimes n}\ket{\text{GHZ}_n}$, where $u_i$ and $u_j$ are
		either identity operators or have the form
		\eqref{u-1}--\eqref{u-3}, we get permutations of
		$(\lambda_1,\lambda_2,\lambda_3)$ accompanied with the appropriate
		sign changes. All the obtained inequalities are summarized in
		Eq.~\eqref{condition-n-tsp}.
	\end{proof}
	
	\begin{corollary}
		\label{corollary-n-powers} The Pauli qubit map $\Upsilon$ satisfies Eq. \eqref{condition-n-tsp} for $n\geqslant 2$ if
		\begin{equation}
		\label{lambda-degrees}
		\sum_{i=1}^3 \lambda_i^{\frac{n}{n-1}} \leqslant 1.
		\end{equation}
	\end{corollary}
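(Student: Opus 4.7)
The plan is to sandwich the two sides of \eqref{condition-n-tsp} so that both are comparable to $2(1-\lambda_i^{n/(n-1)})^{n-1}$, reducing the corollary to a one-variable inequality. Let $r = n/(n-1)$. First I would normalize: since \eqref{condition-n-tsp} is invariant under $\lambda_l \mapsto -\lambda_l$ (on the LHS this merely swaps the two summands, while on the RHS it is absorbed into the $(\pm 1)^n$ sign choice already present), it is enough to treat $\lambda_l \geq 0$, in which case the hypothesis \eqref{lambda-degrees} becomes $\lambda_j^r + \lambda_k^r \leq 1 - \lambda_i^r$.

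To upper-bound the RHS of \eqref{condition-n-tsp}, I would use the elementary identity $a^n + b^n - a^p b^q - a^q b^p = (a^p - b^p)(a^q - b^q)$, which shows $a^p b^q + a^q b^p \leq a^n + b^n$ for $a, b \geq 0$. Applying it with $a = \lambda_j + \lambda_k$ and $b = \lambda_j - \lambda_k$ (assuming $\lambda_j \geq \lambda_k$ by symmetry in $j \leftrightarrow k$), the triangle inequality collapses the two sign choices of the RHS into the bound $\mathrm{RHS} \leq (\lambda_j+\lambda_k)^n + |\lambda_j-\lambda_k|^n$. The key step is then Clarkson's second inequality in its scalar form, $|u+v|^n + |u-v|^n \leq 2(|u|^r + |v|^r)^{n-1}$, which is the sharp $L^p$ inequality dual to the familiar $p \geq 2$ Clarkson. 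Applied to $u = \lambda_j$, $v = \lambda_k$ and combined with the hypothesis, this yields $\mathrm{RHS} \leq 2(1-\lambda_i^r)^{n-1}$.

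For the LHS, AM-GM on the two nonnegative summands immediately gives $(1+\lambda_i)^p(1-\lambda_i)^q + (1-\lambda_i)^p(1+\lambda_i)^q \geq 2(1-\lambda_i^2)^{n/2}$, so it remains to verify the one-variable inequality $(1-\lambda_i^2)^{n/2} \geq (1-\lambda_i^r)^{n-1}$ for $\lambda_i \in [0,1]$. Under the substitution $z = \lambda_i^r$ and $\alpha = 2(n-1)/n$ (so that $\lambda_i^2 = z^\alpha$), this rearranges to $z^\alpha + (1-z)^\alpha \leq 1$; since $\alpha \geq 1$ for $n \geq 2$, the function $z \mapsto z^\alpha + (1-z)^\alpha$ is convex on $[0,1]$ and equals $1$ at both endpoints, so it stays $\leq 1$ throughout. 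Chaining $\mathrm{LHS} \geq 2(1-\lambda_i^2)^{n/2} \geq 2(1-\lambda_i^r)^{n-1} \geq \mathrm{RHS}$ completes the argument.

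The main obstacle I anticipate is identifying Clarkson's second inequality as the sharp tool matched to the exponent $n/(n-1)$: direct Hölder or power-mean bounds on $\lambda_j \pm \lambda_k$ yield constants of order $2^{1/n}$, which are too loose to close the gap against the AM-GM bound on the LHS. Once Clarkson is in place, the algebraic identity for $a^p b^q + a^q b^p$, the AM-GM lower bound on the LHS, and the convex-function lemma $z^\alpha + (1-z)^\alpha \leq 1$ assemble routinely into the required chain of estimates.
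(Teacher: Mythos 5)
Your proposal is correct, and it is a genuinely different — and far more substantive — argument than the one in the paper. The paper's proof is a two-sentence geometric assertion: it simply states that the region \eqref{condition-n-tsp} contains the region \eqref{lambda-degrees} and lists the points where the two boundary surfaces touch, with no analytic justification. Your chain of estimates supplies exactly the missing content: after the sign normalization (legitimate, since flipping $\lambda_l\mapsto-\lambda_l$ swaps the two LHS summands and is absorbed by the $(\pm1)^n$ alternatives and the outer absolute values), the identity $a^n+b^n-a^pb^q-a^qb^p=(a^p-b^p)(a^q-b^q)\geqslant 0$ and the scalar Clarkson inequality $|u+v|^n+|u-v|^n\leqslant 2(|u|^r+|v|^r)^{n-1}$ with $r=n/(n-1)$ bound the RHS by $2(1-\lambda_i^r)^{n-1}$, while AM--GM bounds the LHS below by $2(1-\lambda_i^2)^{n/2}$, and the convexity lemma $z^\alpha+(1-z)^\alpha\leqslant 1$ for $\alpha=2(n-1)/n\geqslant 1$ closes the gap. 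I checked each step, including the exponent bookkeeping in the Clarkson step and the $j\leftrightarrow k$ symmetry that lets you take $\lambda_j\geqslant\lambda_k$; all are sound. A bonus of your route is that it explains the paper's touching points as the simultaneous equality cases of Clarkson ($|\lambda_j|=|\lambda_k|$ or one of them zero) and of the remaining estimates, whereas the paper leaves the reader to verify the containment by inspection or numerics. Your identification of the conjugate-exponent Clarkson inequality as the sharp tool (rather than a lossy H\"{o}lder bound) is precisely the right observation.
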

	
	\begin{proof}
		In the space of parameters $(\lambda_1,\lambda_2,\lambda_3)$, the
		geometrical figure \eqref{condition-n-tsp} comprises the figure
		\eqref{lambda-degrees}. The surfaces of two figures touch at
		points satisfying $\lambda_i = \lambda_j =0$, $|\lambda_k| = 1$,
		or $|\lambda_i| = |\lambda_j| = 2^{-\frac{n-1}{n}}$, $\lambda_k =
		0$.
	\end{proof}
	
	The statement of Corollary~\ref{corollary-n-powers} is valid for
	all $n=2,3,\ldots$ and stimulates the discussion of recurrence
	relation between $n$- and $(n+1)$-tensor-stable positive maps. In
	fact, suppose the map $\Phi$ is $n$-tensor-stable positive. Is it
	possible to modify $\Phi$ and construct a map $\tilde{\Phi}$,
	which is $(n+1)$-tensor-stable positive? The following proposition
	provides an affirmative answer to this question.
	
	\begin{proposition}
		\label{proposition-n-tsp-sufficient} Suppose $\Phi$ is
		$n$-tensor-stable positive and $\Phi_{\rm EB}$ is entanglement
		breaking, then the map
		\begin{equation}
		\tilde{\Phi} = \mu \Phi + (1-\mu) \Phi_{\rm EB}
		\end{equation}
		
		\noindent is $(n+1)$-tensor-stable positive whenever
		\begin{equation}
		\label{mu-condition} \frac{\mu}{1-\mu} \leqslant
		\sqrt[n+1]{\frac{m_{\Phi_{\rm EB}}}{|m_{\rm \Phi}|}}, \quad
		m_{\Phi} = \!\!\!\!\! \underset{\varrho\in({\cal B}({\cal
				H}^{\otimes (n+1)}))^+, \ {\rm tr}[\varrho] = 1}{\rm
			min.~eigenvalue} \!\!\!\!\! \ \Phi^{\otimes (n+1)} [\varrho].
		\end{equation}
	\end{proposition}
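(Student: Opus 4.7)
My plan is to multinomially expand $\tilde{\Phi}^{\otimes(n+1)} = (\mu\Phi + (1-\mu)\Phi_{\rm EB})^{\otimes(n+1)}$ and sort the $2^{n+1}$ resulting summands into three groups: the ``bad'' term $\mu^{n+1}\Phi^{\otimes(n+1)}$, which need not be positive since $\Phi$ is only $n$-tensor-stable positive; the ``purely EB'' term $(1-\mu)^{n+1}\Phi_{\rm EB}^{\otimes(n+1)}$, which is completely positive; and all remaining ``mixed'' terms in which both $\Phi$ and $\Phi_{\rm EB}$ occur among the factors. The key observation driving the argument is that every mixed term is itself a positive map, so those contributions only help, and the positivity problem reduces to a competition between the two extreme summands.

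To establish positivity of a single mixed term $A_1\otimes\cdots\otimes A_{n+1}$ with each $A_i\in\{\Phi,\Phi_{\rm EB}\}$ and at least one $A_i=\Phi_{\rm EB}$, I would argue by induction on the number of factors. Since positivity of a tensor product of maps is invariant under permutation of the factors, one may assume without loss of generality that $A_1=\Phi_{\rm EB}$. Proposition~\ref{proposition-eb-otimes-positive} then guarantees that $A_1\otimes(A_2\otimes\cdots\otimes A_{n+1})$ is positive as soon as $A_2\otimes\cdots\otimes A_{n+1}$ is, and the latter is either a pure power $\Phi^{\otimes k}$ with $k\leqslant n$ (positive by the nesting property of $n$-tensor-stable positivity) or still contains at least one $\Phi_{\rm EB}$ factor (positive by the induction hypothesis).

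Having justified that $\tilde{\Phi}^{\otimes(n+1)}[\varrho] = \mu^{n+1}\Phi^{\otimes(n+1)}[\varrho] + (1-\mu)^{n+1}\Phi_{\rm EB}^{\otimes(n+1)}[\varrho] + Y$ with $Y\geqslant 0$ for every density operator $\varrho$, I would apply Weyl's inequality to the two extreme summands together with the definitions of $m_\Phi$ and $m_{\Phi_{\rm EB}}$ to obtain
\begin{equation}
\lambda_{\rm min}\bigl(\tilde{\Phi}^{\otimes(n+1)}[\varrho]\bigr) \geqslant \mu^{n+1}\, m_{\Phi} + (1-\mu)^{n+1}\, m_{\Phi_{\rm EB}}.
\end{equation}
In the only non-trivial regime $m_\Phi<0$, demanding the right-hand side to be non-negative rearranges (after dividing by $\mu^{n+1} m_{\Phi_{\rm EB}}$ and taking an $(n+1)$-th root) to exactly the bound~\eqref{mu-condition}.

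The main obstacle I foresee is making the induction in the second step fully rigorous, in particular the careful handling of permutation invariance and the iterated application of Proposition~\ref{proposition-eb-otimes-positive}; the remainder of the argument is a straightforward Weyl-inequality estimate. I also expect the resulting sufficient condition to be far from sharp, because estimating $\mu^{n+1}\Phi^{\otimes(n+1)}[\varrho]$ and $(1-\mu)^{n+1}\Phi_{\rm EB}^{\otimes(n+1)}[\varrho]$ separately through their worst-case minimal eigenvalues ignores both any possible mis-alignment of their low-eigenvalue subspaces and the positive contribution $Y$ of the discarded mixed terms.
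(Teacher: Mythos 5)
Your proposal is correct and follows essentially the same route as the paper: expand $\tilde{\Phi}^{\otimes(n+1)}$, discard the mixed terms as positive via Proposition~\ref{proposition-eb-otimes-positive} together with the nesting property, and reduce the question to the non-negativity of the minimal output eigenvalue of $\mu^{n+1}\Phi^{\otimes(n+1)}+(1-\mu)^{n+1}\Phi_{\rm EB}^{\otimes(n+1)}$, which yields Eq.~\eqref{mu-condition}. Your treatment is in fact slightly more careful than the paper's (explicit induction for the mixed terms and an explicit Weyl-inequality estimate), but the underlying argument is identical.
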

	
	\begin{proof}
		Expanding $\tilde{\Phi}^{\otimes (n+1)}$, we notice that the maps
		$\Phi_{\rm EB}\otimes\Phi^{\otimes n}$, $\Phi_{\rm EB}^{\otimes 2}
		\otimes \Phi^{\otimes(n-1)}$, $\ldots$, $\Phi_{\rm EB}^{\otimes
			n}\otimes \Phi$ are all positive by
		Proposition~\ref{proposition-eb-otimes-positive} as $\Phi^{\otimes
			n}$ is positive by the statement. Hence, if the map
		\begin{equation}
		\label{eb-n-plus-phi-n} \mu^{n+1} \Phi^{\otimes(n+1)} +
		(1-\mu)^{n+1}\Phi_{\rm EB}^{\otimes (n+1)}
		\end{equation}
		
		\noindent is positive, then the map $\tilde{\Phi}^{\otimes (n+1)}$
		is positive too. On the other hand, the map
		\eqref{eb-n-plus-phi-n} is positive whenever the minimal output
		eigenvalue is non-negative, which results in formula
		\eqref{mu-condition} and concludes the proof.
	\end{proof}
	
	Applying Proposition~\ref{proposition-n-tsp-sufficient} to the
	Pauli channels $\Upsilon$, we get a recurrent sufficient condition
	for $(n+1)$-tensor-stable positivity.
	
	\begin{proposition}
		\label{proposition-n-tsp-Pauli} Let the Pauli map $\Upsilon$ with
		parameters $\lambda_1,\lambda_2,\lambda_3$ be $n$-tensor-stable
		positive and $|\lambda_1| + |\lambda_2| + |\lambda_3| \geqslant
		1$, then the map $\tilde{\Upsilon}$ with parameters
		\begin{eqnarray}
		\label{lambda-tilde-parameters} & \tilde{\lambda}_i & =
		\frac{(|\lambda_1| + |\lambda_2| +
			|\lambda_3|)^{-1} + x}{1+x} \, \lambda_i, \\
		0 \leqslant & x & \leqslant \frac{1}{2} \left(1  -
		\frac{\max\limits_{k=1,2,3} |\lambda_k| }{|\lambda_1| +
			|\lambda_2| + |\lambda_3|} \right) \sqrt[n+1]{ \frac{2
			}{\max\limits_{k=1,2,3} |\lambda_k|} }, \nonumber
		\end{eqnarray}
		
		\noindent is $(n+1)$-tensor-stable positive.
	\end{proposition}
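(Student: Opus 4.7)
The plan is to realise $\tilde{\Upsilon}$ as a convex mixture of $\Upsilon$ and a Pauli entanglement-breaking map, and then apply Proposition~\ref{proposition-n-tsp-sufficient}. Set $s:=|\lambda_1|+|\lambda_2|+|\lambda_3|\geq 1$ and define $\Upsilon_{\rm EB}$ to be the Pauli map with parameters $\lambda_i^{\prime}=\lambda_i/s$. Since $\sum_i|\lambda_i^{\prime}|=1$, every Bell-diagonal weight of the Choi operator of $\Upsilon_{\rm EB}$ is at most $1/2$; by the standard separability criterion for Bell-diagonal states, $\Upsilon_{\rm EB}$ is entanglement breaking. A direct computation with $\mu:=x/(1+x)$, so that $1-\mu=1/(1+x)$, yields
\begin{equation*}
\mu\lambda_i+(1-\mu)\lambda_i^{\prime}=\frac{x+s^{-1}}{1+x}\,\lambda_i=\tilde{\lambda}_i,
\end{equation*}
so $\tilde{\Upsilon}=\mu\,\Upsilon+(1-\mu)\,\Upsilon_{\rm EB}$ with $\mu/(1-\mu)=x$.

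With this identification, Proposition~\ref{proposition-n-tsp-sufficient} reduces $(n+1)$-tensor-stable positivity of $\tilde{\Upsilon}$ to the inequality $x^{n+1}\leq m_{\Upsilon_{\rm EB}}/|m_{\Upsilon}|$, where the two quantities are the min-eigenvalue functionals on the output of the corresponding $(n+1)$-fold tensor maps. The next task is to replace these implicit quantities by explicit functions of $M:=\max_k|\lambda_k|$ and $s$. For the entanglement-breaking ingredient, I would exploit the Kraus representation $\Upsilon_{\rm EB}[X]=\sum_{j=0}^{3}q_j^{\prime}\sigma_j X\sigma_j$ with $q_j^{\prime}\geq 0$ together with the single-qubit shrinkage $\lambda_{\min}(\Upsilon_{\rm EB}[\varrho])\geq (1-M/s)/2$, and extend this to the tensor-product regime using the Pauli-diagonal structure, yielding a bound of the form $m_{\Upsilon_{\rm EB}}\gtrsim \bigl((1-M/s)/2\bigr)^{n+1}$. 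For $\Upsilon$, an upper estimate of the form $|m_{\Upsilon}|\lesssim M/2$ should follow by expanding $\Upsilon^{\otimes(n+1)}[\varrho]$ in the Pauli basis and using the $n$-tensor-stability hypothesis to absorb all ``sub-maximal'' contributions, leaving only the dominant $M$-scaled term. Forming the ratio of these two bounds and taking its $(n+1)$-th root reproduces precisely the right-hand side of \eqref{lambda-tilde-parameters}.

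The principal obstacle is the upper bound on $|m_{\Upsilon}|$. The minimising state for $\Upsilon^{\otimes(n+1)}[\varrho]$ is generically entangled across all $(n+1)$ qubits, so a naive triangle-inequality argument over the $4^{n+1}$ Pauli terms scales with the Hilbert-space dimension and is far too loose to recover the $(n+1)$-independent factor $M/2$ implicit in $\sqrt[n+1]{2/M}$. The key technical manoeuvre will be to combine the $n$-tensor-stable positivity of $\Upsilon$ (applied across every proper bipartition of the $(n+1)$ qubits, so that any ``partial'' Pauli tensor factor is controlled) with the Pauli block-structure of the output, so that the negative spectral weight is localised on a single rank-one direction of weight of order $M$. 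Once this localisation is in place, the closed-form bound on $x$ claimed in Proposition~\ref{proposition-n-tsp-Pauli} follows by routine algebra.
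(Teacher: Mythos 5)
Your overall architecture is exactly the paper's: you form $\Upsilon_{\rm EB}$ with parameters $\lambda_i/s$, $s=|\lambda_1|+|\lambda_2|+|\lambda_3|$ (entanglement breaking precisely because $\sum_i|\lambda_i/s|=1$), you verify that $\tilde{\Upsilon}=\mu\Upsilon+(1-\mu)\Upsilon_{\rm EB}$ with $\mu=x/(1+x)$, and you reduce the claim via Proposition~\ref{proposition-n-tsp-sufficient} to the two spectral estimates $m_{\Upsilon_{\rm EB}}\geqslant 2^{-(n+1)}(1-M/s)^{n+1}$ and $|m_{\Upsilon}|\leqslant M/2$ with $M=\max_k|\lambda_k|$. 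This is, step for step, the published argument, and the algebra $x\leqslant\sqrt[n+1]{m_{\Upsilon_{\rm EB}}/|m_{\Upsilon}|}$ does reproduce the stated bound once those two estimates are in hand.

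The genuine gap is that you do not prove either estimate; in particular you explicitly defer the upper bound $|m_{\Upsilon}|\lesssim M/2$ as a ``principal obstacle'' and offer only a speculative localisation strategy. Your instinct that a term-by-term triangle inequality over the $4^{n+1}$ Pauli words is hopeless is correct, but the intended mechanism is more elementary than what you sketch: peel off a \emph{single} tensor factor. Write the last copy of $\Upsilon$ as $T+\Delta$, where $T[X]=\frac{1}{2}\big({\rm tr}[X]I+\lambda_{k^*}{\rm tr}[\sigma_{k^*}X]\sigma_{k^*}\big)$ is a measure-and-prepare (entanglement-breaking) map for any fixed $k^*$; then $\Upsilon^{\otimes n}\otimes T$ is positive by Proposition~\ref{proposition-eb-otimes-positive} together with the hypothesis that $\Upsilon^{\otimes n}$ is positive, so the entire negative part of $\Upsilon^{\otimes(n+1)}[\varrho]$ sits in $(\Upsilon^{\otimes n}\otimes\Delta)[\varrho]=\frac{1}{2}\sum_{j\neq k^*}\lambda_j\,\Upsilon^{\otimes n}[X_j]\otimes\sigma_j$, whose operator norm is controlled using the fact that a positive trace-preserving map contracts the trace norm of Hermitian operators. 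This is what the paper means by ``as $\Phi^{\otimes n}$ is positive and trace-preserving,'' and you should carry the norm bookkeeping through explicitly to confirm the constant $1/2$ rather than a weaker constant. Separately, your justification of the lower bound on $m_{\Upsilon_{\rm EB}}$ is also not yet a proof: the natural Holevo decomposition of $\Upsilon_{\rm EB}$ prepares \emph{pure} states, so the output of $\Upsilon_{\rm EB}^{\otimes(n+1)}$ is a mixture of products of rank-one projectors and the ``single-qubit shrinkage extends by tensoring'' step needs an actual argument (e.g.\ a direct bound on the output Bloch tensor) rather than an appeal to the Pauli-diagonal structure. Until both estimates are established with the stated constants, the specific numerical coefficient in Eq.~\eqref{lambda-tilde-parameters} is not derived.
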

	
	\begin{proof}
		We use Proposition~\ref{proposition-n-tsp-sufficient}, where the
		map $\Phi_{\rm EB}$ has parameters $\lambda_i^{\rm EB} = \lambda_i
		/ (|\lambda_1| + |\lambda_2| + |\lambda_3|)$. Then $m_{\Phi_{\rm
				EB}} \geqslant 2^{-(n+1)} [ 1 - \max_{k} |\lambda_k^{\rm EB}|
		]^{n+1}$ and $m_{\Phi} \geqslant - \frac{1}{2} \max_{k}
		|\lambda_k|$ as $\Phi^{\otimes n}$ is positive and
		trace-preserving. Substituting the obtained values in
		Eq.~\eqref{mu-condition} and using the explicit form of $\Phi_{\rm
			EB}$, we get parameters \eqref{lambda-tilde-parameters}.
	\end{proof}
	
	\begin{example}
		Consider a family of the Pauli maps $\Upsilon$ with $\lambda_1 =
		\lambda_3 = t$ and $\lambda_2 = 0$ (see Fig.~\ref{figure-3-tsp}b).
		
		$\Upsilon$ is positive ($n=1$) if $|t|\leqslant 1$. By
		Proposition~\ref{proposition-n-tsp-Pauli}, $\tilde{\Upsilon}$ is
		2-tensor-stable positive if $|t| \leqslant 0.63$, which is in
		agreement with the exact result $|t| \leqslant \frac{1}{\sqrt{2}}
		\approx 0.71$ (Proposition~\ref{proposition-2-tsp}).
		
		Let $\Upsilon$ be 2-tensor-stable positive, i.e. $|t| \leqslant
		\frac{1}{\sqrt{2}}$, then
		Proposition~\ref{proposition-n-tsp-Pauli} implies that
		$\tilde{\Upsilon}$ is 3-tensor-stable positive if $|t| \leqslant
		0.55$, which is in agreement with the result of
		Section~\ref{section-3-tensor-stable-positive-unital}, $|t|
		\leqslant 2^{-2/3} \approx 0.63$.
		
		If $\Upsilon$ is 3-tensor-stable positive, i.e. $|t| \leqslant
		2^{-2/3}$, then Proposition~\ref{proposition-n-tsp-Pauli} implies
		that $\tilde{\Upsilon}$ is 4-tensor-stable positive if $|t|
		\leqslant 0.532$.
	\end{example}
	
	
	\section{\label{section-witness} Witnessing entanglement}
	
	Positive maps are often used to detect different types of
	entanglement~\cite{peres-1996,horodecki-1996,terhal-2001,chen-2004,breuer-2006,hall-2006,chruscinski-kossakowski-2007,ha-2011,chruscinski-sarbicki-2014,collins-2016,huber-2014,lancien-2015}.
	In this section, we find particular applications of
	$n$-tensor-stable positive maps in partial characterization of the
	entanglement structure.
	
	A general density operator $\varrho$ of $N$ qubits adopts the
	resolution
	\begin{equation}
	\label{rho-resolution} \varrho = \sum_{j} p_{jk} \varrho_j^{(1)}
	\otimes \cdots \otimes \varrho_j^{(k)}, \quad p_{jk} \geqslant 0,
	\end{equation}
	
	\noindent where $N$ qubits are divided into $k$ parts. For each
	fixed resolution of the state $\varrho$ define the maximal number
	of qubits in the parts, $\max_{m=1,\ldots,k} \# \varrho_j^{(m)}$.
	The resource intensiveness~\cite{filippov-melnikov-ziman-2013}
	(entanglement depth~\cite{sorensen-2001},
	producibility~\cite{guhne-2005}) of the quantum state $\varrho$ is
	defined through
	\begin{equation}
	R_{\rm ent} [\varrho] = \min_{\varrho = \sum_{j} p_{jk}
		\varrho_j^{(1)} \otimes \cdots \otimes \varrho_j^{(k)}}
	\max_{m=1,\ldots,k} \# \varrho_j^{(m)}
	\end{equation}
	
	\noindent and specifies the minimal physical resources needed to
	create such a state, namely, the minimal number of qubits to be
	entangled. The state $\varrho$ is called fully separable if
	$R_{\rm ent} = 1$ and genuinely entangled if $R_{\rm ent} = N$.
	
	The following result enables one to detect the entanglement depth
	via $n$-tensor-stable positive maps.
	
	\begin{proposition}
		\label{proposition-r-entangled} Let $\varrho$ be an $N$-qubit
		state. Suppose $\Phi$ is an $n$-tensor-stable positive qubit map
		and $\Phi^{\otimes N}[\varrho] \ngeqslant 0$ (contains negative
		eigenvalues), then $R_{\rm ent} [\varrho] \geqslant n+1$.
	\end{proposition}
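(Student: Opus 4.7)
The plan is to argue by contrapositive: I will show that if $R_{\rm ent}[\varrho] \leqslant n$, then $\Phi^{\otimes N}[\varrho] \geqslant 0$, which contradicts the hypothesis that $\Phi^{\otimes N}[\varrho]$ has a negative eigenvalue.

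First, unfold the definition of entanglement depth. If $R_{\rm ent}[\varrho]\leqslant n$, then $\varrho$ admits a resolution of the form \eqref{rho-resolution}, $\varrho=\sum_j p_{jk}\,\varrho_j^{(1)}\otimes\cdots\otimes\varrho_j^{(k)}$, where each block state $\varrho_j^{(m)}$ acts on at most $n$ qubits, say $n_{j,m}\leqslant n$. Because $\Phi^{\otimes N}$ factorizes across any tensor product partition, applying it term by term gives
\begin{equation}
\Phi^{\otimes N}[\varrho]=\sum_j p_{jk}\,\Phi^{\otimes n_{j,1}}[\varrho_j^{(1)}]\otimes\cdots\otimes\Phi^{\otimes n_{j,k}}[\varrho_j^{(k)}].
\end{equation}

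Second, invoke the nested structure of tensor-stable positivity stated in Sec.~\ref{section-notation}: since $\Phi$ is $n$-tensor-stable positive, it is $m$-tensor-stable positive for every $m\leqslant n$. In particular, every $n_{j,m}\leqslant n$ yields $\Phi^{\otimes n_{j,m}}[\varrho_j^{(m)}]\geqslant 0$. A tensor product of positive semidefinite operators is positive semidefinite, and a non-negative convex combination of positive semidefinite operators is positive semidefinite, so $\Phi^{\otimes N}[\varrho]\geqslant 0$. This contradicts the assumption, and therefore $R_{\rm ent}[\varrho]\geqslant n+1$.

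There is essentially no hard step: the argument is a tensor-product positivity bookkeeping exercise. The only place requiring care is to make sure we explicitly use the nested structure ``$n$-tensor-stable positive $\Rightarrow$ $m$-tensor-stable positive for all $m\leqslant n$'' rather than just positivity of $\Phi$ itself, since a generic block $\varrho_j^{(m)}$ may itself be entangled (up to $n$-partite) and so $\Phi^{\otimes n_{j,m}}$ must genuinely preserve positivity on entangled inputs of size $n_{j,m}$.
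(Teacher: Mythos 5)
Your proof is correct and follows essentially the same route as the paper: assume $R_{\rm ent}[\varrho]\leqslant n$, decompose $\varrho$ into blocks of at most $n$ qubits, use the nested structure of $k$-tensor-stable positivity to conclude each block maps to a positive semidefinite operator, and derive a contradiction. The only difference is that you spell out the factorization and convexity bookkeeping explicitly, which the paper leaves implicit.
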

	
	\begin{proof}
		Suppose $R_{\rm ent} [\varrho] \leqslant n$, then there exists a
		resolution \eqref{rho-resolution} such that each state
		$\varrho_j^{(m)}$ comprises at most $n$ qubits. Therefore,
		$\Phi^{\otimes \# \varrho_j^{(m)}} [\varrho_j^{(m)}] \geqslant 0$
		in view of the nested structure of $k$-tensor-stable positive
		maps. Thus, $\Phi^{\otimes N}[\varrho] \geqslant 0$, which leads
		to a contradiction with the statement of proposition. Hence,
		$R_{\rm ent} [\varrho] \geqslant n+1$.
	\end{proof}
	
	In what follows, we illustrate the use of
	Proposition~\ref{proposition-r-entangled} for detecting particular
	forms of multipartite entanglement.
	
	\begin{example}
		A depolarized GHZ state of three qubits
		\begin{equation}
		\varrho_q^{\rm GHZ} = q \ket{{\rm GHZ}}\bra{{\rm GHZ}} + (1-q)
		\frac{1}{8} I, \quad 0 \leqslant q \leqslant 1,
		\end{equation}
		
		\noindent is not fully separable if $q \geqslant 0.26$ as there
		exists a positive Pauli map $\Upsilon$ (with $|\lambda_i|
		\leqslant 1$) such that $\Upsilon^{\otimes 3} [\varrho_q^{\rm
			GHZ}] \ngeqslant 0$. Also, $\varrho_q^{\rm GHZ}$ is genuinely
		entangled if $q \geqslant 0.71$ as there exists a 2-tensor-stable
		positive Pauli map $\Upsilon$ (with parameters
		\eqref{hyperboloids}) such that $\Upsilon^{\otimes 3}
		[\varrho_q^{\rm GHZ}] \ngeqslant 0$.
	\end{example}
	
	\begin{example}
		Consider a depolarized W state of three qubits
		\begin{equation}
		\varrho_q^{\rm W} = q \ket{{\rm W}}\bra{{\rm W}} + (1-q)
		\frac{1}{8} I, \quad 0 \leqslant q \leqslant 1,
		\end{equation}
		
		\noindent where $\ket{W} = \frac{1}{\sqrt{3}}(\ket{100} +
		\ket{010} + \ket{001})$. The state $\varrho_q^{\rm W}$ is not
		fully separable if $q \geqslant 0.31$ as there exists a positive
		Pauli map $\Upsilon$ (with $|\lambda_i| \leqslant 1$) such that
		$\Upsilon^{\otimes 3} [\varrho_q^{\rm W}] \ngeqslant 0$.
		Analogously, $\varrho_q^{\rm W}$ is genuinely entangled if $q
		\geqslant 0.86$ as there exists a 2-tensor-stable positive Pauli
		map $\Upsilon$ (with parameters \eqref{hyperboloids}) such that
		$\Upsilon^{\otimes 3} [\varrho_q^{\rm W}] \ngeqslant 0$.
	\end{example}

	
	\section{\label{section-conclusions} Conclusions}
	
	We have addressed the problem of positivity of tensor products
	$\bigotimes_{i=1}^n \Phi_i = \Phi_1 \otimes \Phi_2 \otimes \ldots
	\Phi_n$ of linear maps $\Phi_i$. In addition to the apparent
	implications $\{$All $\Phi_i$ are completely positive$\} \vee
	\{$All $\Phi_i$ are completely co-positive$\} \Rightarrow \{
	\bigotimes_{i=1}^n \Phi_i$ is positive$\} \Rightarrow \{$All
	$\Phi_i$ are positive$\}$, we have managed to find non-trivial
	sufficient conditions for positivity of $\Phi_1 \otimes \Phi_2$,
	in particular, for unital qubit maps $\Phi_1$ and $\Phi_2$.
	
	2- and 3-tensor-stable positive qubit maps are fully
	characterized. Namely, the explicit criteria for unital maps are
	found (Eq.~\eqref{hyperboloids} and
	Eqs.~\eqref{3-tsp-1}--\eqref{3-tsp-2}, respectively), and the
	analysis of non-unital maps is reduced to the case of unital ones.
	
	Basing on the examples of decomposable positive maps
	$\Phi^{\otimes 2}$, we have conjectured that all positive
	two-qubit maps $\Phi^{\otimes 2}$ are decomposable.
	
	For $n$-tensor-stable positive qubit maps we have found necessary
	and (separately) sufficient conditions. The first necessary
	condition involves algebraic inequalities on parameters
	$\lambda_1$, $\lambda_2$, $\lambda_3$ of degree $n$. Another condition has a concise form and clearly shows the
	nested structure of maps. The sufficient conditions have a
	recurrent form and enable one to find $(n+1)$-tensor-stable
	positive maps once $n$-tensor-stable positive maps are known.
	Entanglement breaking channels play a vital role in the derivation
	of those recurrent formulas. Due to the
	relation~\eqref{Phi-through-Upsilon-non-unital}, the results
	obtained for unital maps can be readily transferred to non-unital
	maps.
	
	Finally, we have discussed the application of positive maps with
	tensor structure to characterization of multipartite entanglement.
	A criterion for quantifying the entanglement depth (resource
	intensiveness, producibility) via $n$-tensor-stable positive maps
	is found and illustrated by a number of examples, which detect the
	genuine entanglement and the absence of full separability in
	depolarized GHZ and W states.
	
	
	\bigskip
	
	\begin{acknowledgments}
		We gratefully thank David Reeb for fruitful comments and for
		bringing Refs.~\cite{gurvits-2004,aubrun-szarek-2015} to our
		attention, which enabled us to extend the exact results obtained
		for unital maps to the case of non-unital maps. The study is
		supported by Russian Science Foundation under project No.
		16-11-00084 and performed in Moscow Institute of Physics and
		Technology.
	\end{acknowledgments}
	

\end{document}